\providecommand{\U}[1]{\protect\rule{.1in}{.1in}}
\newtheorem{assumption}{Assumption}
\newtheorem{lemma}{Lemma}
\newtheorem{proposition}{Proposition}
\begin{document}

\title{A Social Network Analysis of Occupational Segregation}
\author{I. Sebastian Buhai and Marco J. van der Leij\thanks{Buhai (corresponding author) is affiliated with
SOFI\ at Stockholm University, NIPE at Minho University, and CEPREMAP in Paris;
email:\ \href{mailto:sbuhai@gmail.com}{sbuhai@gmail.com}. Van der Leij
is affiliated with the Congregation of the Blessed Sacrament and the University of Amsterdam;
email:\ \href{mailto:mvanderleij@gmail.com}{mvanderleij@gmail.com}. The latest
version can be downloaded from
\url{https://www.sebastianbuhai.com/papers/publications/segregation_networks.pdf}.
We thank Editor-in-Chief Thomas Lubik, an anonymous Associate Editor, and especially two anonymous referees for detailed recent feedback.
For comments and suggestions on (many) previous drafts, we are grateful to
Willemien Kets, Leeat Yariv, Mich\`{e}le Belot, Ben Golub, Sanjeev Goyal, Wilbert Grevers,
Maarten Janssen, Joan de Mart\'{\i}, Friederike Mengel, James Montgomery,
Jos\'{e} Luis Moraga-Gonz\'{a}lez, Wojciech Olszewski, Gilles Saint-Paul, Ott
Toomet, Jan van Ours, Fernando Vega-Redondo, and Yves Zenou. We
also acknowledge audiences from seminars at Stockholm University, Babes-Bolyai
University, Northwestern University, CPB\ Netherlands Bureau for Economic
Policy Analysis, Aarhus University, Tartu University, Tinbergen Institute,
University College London,\ and from numerous workshops and conferences.}}
\date{\ December 2022}

\begin{abstract}
\pagenumbering{arabic} 
We propose an equilibrium interaction model of occupational segregation
and labor market inequality between two social groups, generated exclusively through the documented tendency
to refer informal job seekers of identical ``social color''. The expected social color homophily
in job referrals strategically induces distinct career choices for
individuals from different social groups, which further translates into stable partial
occupational segregation equilibria with sustained wage and employment
inequality -- in line with observed patterns of racial or gender labor market disparities.
Supporting the qualitative analysis with a calibration and simulation exercise, we furthermore show that 
both first and second best utilitarian social optima entail segregation, 
any integration policy requiring explicit distributional concerns. Our framework highlights 
that the mere social interaction through homophilous contact networks can be a pivotal
channel for the propagation and persistence of gender and racial
labor market gaps, complementary to long studied mechanisms such as taste or
statistical discrimination. \medskip

\textbf{JEL}: D85, J15, J16, J24, J31

\textbf{Keywords}: Social Networks, Homophily, Job Referrals, Occupational Segregation, Labor
Market Inequality, Social Welfare

\end{abstract}
\maketitle

\shortTitle{A Social Network Analysis of Occupational Segregation}
\baselineskip=18pt plus 2pt \lineskip=3pt minus 1pt \lineskiplimit=2pt

\pagebreak 
\section{Introduction}

Most studies investigating the causes of labor market inequality
agree that classical theories such as taste or statistical discrimination by
employers cannot, \textit{alone,} explain pay, employment, and occupational
disparities between genders or races, and their remarkable persistence over
time.\footnote {There have been, for instance, countless empirical studies within both sociology and economics
documenting the extent and shape of occupational segregation -- a consistent finding being concisely summed up by Richard Posner: \textquotedblleft a glance of the composition
of different occupations shows that in many of them, particularly racial,
ethnic, and religious groups, along with one or the other sex and even groups
defined by sexual orientation (heterosexual vs. homosexual), are
disproportionately present or absent\textquotedblright%
. The quote is from the last paragraph of an essay published
by Posner on \textquotedblleft The\ Becker-Posner Blog",\ on 30-01-2005
(address:
\url{https://www.becker-posner-blog.com/2005/01/larry-summers-and-women-scientists--posner.html}
last retrieved on  20-09-2022). Posner goes on by illustrating with an example 
of gender-based occupational segregation that is less likely
to be due to discrimination:\ \textquotedblleft a much higher percentage of
biologists than of physicists are women, and at least one branch of biology,
primatology, appears to be dominated by female scientists.\textquotedblright%
} While several meritorious complementary theories have been
advanced, some leading social scientists have suggested
that social interactions could also be an important, yet relatively little
explored channel in this context, see for instance \citet{arrow}.\footnote{For more recent overviews of potential channels explaining
observed labor market inequality between genders, see for instance
\citet{Bertrand}, \citet{GoldinKatz2011, GoldinKatz2016}, or
\citet{BlauKahn2017}.}

In this paper, we investigate a potential network channel leading to
occupational segregation and wage inequality in the labor market, by
developing and analysing an intuitive, parsimonious social interactions model.\footnote{The role of informal personal networks for inter-gender
labor market inequality had been emphasized in sociology, often as part of the
gender-specific "social capital", at least since \citet{Burt1992,Burt1998}. In
economics the interest took up more slowly, and was little welcome for quite a while -- e.g., a first discussion version of this paper, containing already part of our current analysis/ results, was public in 2006, see \url{https://papers.tinbergen.nl/06016.pdf}, when few economists were working in this direction and even some hostility to the idea could be perceived -- but, optimistically, there has recently been quite a wave of published studies on the role of personal networks for gender (and/or racial, ethnical) disparities on the
labor market, using a diverse set of approaches, see, e.g.,
\citet{zeltzer}, \citet{Mengel 2020}, \citet{LindenlaubPrummer}, as well as other relevant studies referenced in the comprehensive recent review by \citet{Jackson 2022}.}
We construct a four-stage model of occupational segregation between two
homogeneous, exogenously given, mutually exclusive social groups (e.g.,
genders or races) acting in a two-job labor market. 
While the model stages are formally described and detailed in Section \ref{secsebi3}, we sketch them intuitively below. In the first stage each
individual chooses one of two specialized educations to become a worker. In
a second stage individuals randomly form "friendship"\ ties with other
individuals, with a tendency to form relatively more ties with members of the
same social group, what is known in the literature as \textquotedblleft%
\textit{(inbreeding) homophily}\textquotedblright, \textquotedblleft%
\textit{inbreeding bias}\textquotedblright\ or "\textit{assortative
matching}".\footnote{Homophily measures the relative frequency of within-group
versus between-group friendships. There exists inbreeding homophily or an
inbreeding bias if the group's homophily is higher than what would have been
expected if friendships are formed randomly. See, e.g.,
\citet{Currarini et al} for formal definitions.} In a third stage
individuals search for jobs, either directly or through their networks of employed friendship contacts. In the
last stage workers earn a wage and spend their income on a single
consumption good. 

We obtain the following results. First, unsurprisingly, we show that with
inbreeding homophily within social groups, a complete polarization in terms of
occupations across the two groups arises as a stable equilibrium outcome. This
follows from standard arguments on network effects. If a group is
completely segregated and specialized in one type of job, then each individual
in the group has many more job contacts if she "sticks" to her specialization.
Hence, sticking to one specialization ensures good job opportunities to group
members, and these incentives stabilize segregation.

We next extend the basic model allowing for \textquotedblleft
good\textquotedblright\ and \textquotedblleft bad\textquotedblright\ jobs, in
order to analyze equilibrium wage and unemployment inequality between the two
social groups. We show that with large differences in job attraction (i.e.,
wages at equal labor supply), the main outcome of the model is that one social
group "fully specializes" in the good job, while the other group "mixes" over
the two jobs. In this partial segregation equilibrium, the group that
specializes in the good job always has a higher payoff and a lower
unemployment rate.\footnote{Throughout the paper, "payoff" (or "expected payoff") is short for "ex ante expected payoff".}
Furthermore, with a sufficiently large intra-group
homophily, the fully-specializing group also has a higher equilibrium
employment rate \emph{and} a higher wage rate than the "mixing" group, thus
being twice advantaged. Hence, our model is able to explain typical empirical
patterns of gender, race, or ethnic labor market inequality — see, e.g., our next (sub)section reviewing known empirical facts on gender segregation patterns in the labour market. The driving force
behind our result is the fact that the group that fully specializes, being
homogenous occupationally, is able to create a denser job contact network than
the mixing group. We emphasize at this point that we do not intend to imply that there is no more taste or statistical discrimination by
employers in the labor market.\ On the contrary, we regard our social
interaction model, classical discrimination theories, as well as other
theories such as, e.g., gender-specific work amenity preferences, as
\textit{complementary} bits in explaining observed patterns of labor market inequality.

We finally consider whether society benefits from an integration policy, in
the sense that labor inequality between the social groups would be attenuated.
To this aim, we analyze a social planner's first and second-best policy
choices.\footnote{Intuitively, our first best utilitarian social policy assumes that the social planner can fully control the social group fractions choosing one or the other education track (for instance, by being able to force switches). In our second-best social welfare analysis, the social planner allows individual incentives to shape the educational choices, having only the means to stabilize a symmetric 'integration' equilibrium, in which the fraction of individuals choosing a particular education is the same in both social groups. More formal/ detailed descriptions appear later, in the corresponding section containing the social welfare analyses.} 
Rather counter-intuitively, segregation is the preferred outcome in the first-best
analysis and a laissez-faire policy leading to segregation shaped by
individual incentives is always maximizing social welfare in the second-best case.
Hence, overall employment is higher under segregation, while laissez-faire
wage inequality remains sufficiently small, such that segregation is an
overall socially optimal policy. We show that integration policies are
justified only in the presence of additional distribution concerns, beyond
individual utilities. Our social welfare analysis points out therefore some policy
issues unfortunately ignored in most debates concerning anti-segregation legislature,
such as the need for \textit{the explicit promotion of distributional concerns}.

Our model shares similarities with the frameworks by \citet{benabou} and
\citet{Mailath et al}. \citet{benabou} introduces a model in which individuals
choose between high and low education; the benefits of education are
determined globally, but the costs are determined by local education
externalities. As in our model, these local education externalities lead to
segregation and inequality at the macro level. Unlike in our model, inequality
in education level fully explains pay gaps in \citet{benabou}, which, for
instance, is at odds with the evidence on gender education gaps --- see the
discussion in our Section \ref{secsebi1_occeg}. \citet{Mailath et al} also
consider a model in which workers choose between high and low (no) education,
but in a setting with search and matching between firms and workers. The
features of their and our segregation equilibria are similar, even though the
behavioral mechanisms behind them are very different. Crucially, in their
model firms may target their search to one group of workers; they show that
there exists a segregation equilibrium, in which workers of only one group
invest in high skills and firms target their search to this highly-skilled
group. In that case, unemployment is lower for the highly-skilled group, whose
wage is also higher, since lower unemployment gives them a better bargaining
position vis-\`{a}-vis the firms. In both \citet{benabou} and
\citet{Mailath et al} social welfare policies are ambiguous: depending on the
parameters, either integration or segregation might be socially optimal. This
is in stark contrast to our model, in which a segregation is always a first
best optimal policy, and under reasonable assumptions also a second-best
optimal policy. It suggests that ignoring the channel of homophilous job
contacts may overestimate the welfare effects of integration policies. While
we believe the mechanisms considered by \citet{benabou} and
\citet{Mailath et al} are present as well (and so are other channels,
including direct taste discrimination), we show in our paper that neither
spatially local education externalities, nor targeted firm search are needed
to explain the wage and employment gaps between gender or racial groups: job
search through social networks combined with inbreeding homophily is already
sufficient. We provide a small calibration and simulation exercise showing that our model
generates wage and unemployment gaps in line with actual figures.

Significant progress has been achieved in modeling labor market phenomena by
means of social networks. Existing research has for instance investigated the
effect of social networks on employment, wage inequality, and labor market
transitions.\footnote{The seminal paper on the role of networks in the labor
markets is \citet{Montgomery 1991}. Other well cited papers include, e.g.,
\citet{calvojackson1, calvojackson2}, \citet{IoannidesSoutevent}, or
\citet{bramoullesp};\ for more general reviews on the economic analysis of
social networks see,e.g., \citet{Goyal 2007,Goyal 2016},
\citet{Jackson 2008}.} This work points out that individual
performance on the labor market crucially depends on the position individuals
take in the social network structure. However, these studies typically do not
focus on the role that networks play in accounting for persistent patterns of
occupational segregation and inequality between races, genders or
ethnicities.\footnote{\citet{calvojackson1} find that two groups with two
different networks may have different employment rates due to the endogenous
decision to drop out of the labor market. However, their finding draws heavily
on an example that already assumes a large amount of inequality; in
particular, the groups are initially unconnected and the initial employment
state of the two groups is unequal.} A recent exception is
\citet{Bolteetal} who model and discuss how the distribution of job referrals could lead
to persistent inequality and intergenerational immobility, also discussing the impact on productivity. While we do not 
analyse the wider impact and implications of the \textit{network structure} or the intergenerational dynamic aspects, our simpler, 'reduced form' approach
enables us to focus on and highlight the essential role of the homophilous referral mechanism that relates indirect job finding via
social network contacts to occupational segregation and labor market inequality in earnings and employment
between social groups.\footnote{Linked to the intergenerational focus of \citet{Bolteetal}, it is perhaps of interest to note that our model could easily be enhanced to also address some of the relevant dynamic, intergenerational aspects in this context of labour markets with homophilous job referrals, among which the persistence of the occupational segregation over time, for instance by analyzing a myopic best-reply dynamics starting from given initial conditions (this specific implementation has been suggested to us by an anonymous referee). Inter alia, the gender or racial homophily in job referrals could thus be explained to contribute essentially to the perpetuation of occupational segregation and labour market inequalities along gender and race lines.}
In particular, we use our model to explore strategic career choices in this context of homophilous job contact networks, complementing therefore the scope of \citet{Bolteetal}.

The next section overviews empirical findings on labor market gaps between genders/ races,
on the relevance of job contact networks, and on the extent of social group
homophily. We set up our model of occupational segregation in
Section~\ref{secsebi3} and discuss key results on the segregation
equilibria in Section~\ref{sec_equilibrium}. Section \ref{sec_welfare}
analyses social welfare optima. We summarize and conclude in
Section~\ref{secsebi6}.

\section{Empirical background}

\label{secsebi2}

In this section we review the empirical background that motivates the
building blocks of our model. We first discuss evidence on occupational
segregation, and the relation to gender and race wage gaps. Next, we overview
empirical literature on the role of job contact networks and on homophily.

\subsection{Wage gaps and segregation by occupation and education fields,
between genders and races}

\label{secsebi1_occeg}

Although labor markets have become more open to traditionally disadvantaged
groups, wage differentials by race and gender remain stubbornly persistent,
see, e.g., \citet{altonjiblank},
\citet{BlauKahn2000, BlauKahn2006, BlauKahn2017}, \citet{England et al}, or \citet{Jackson 2022}.
\citet{altonjiblank} note for instance that in 1995 a full-time employed white
male earned on average \$ 42,742, whereas a full-time employed black male
earned on average \$ 29,651, thus 30 \% less, and an employed white female \$
27,583, that is, 35 \% less. Although the pay gaps diminished over time,
especially between genders, in 2018 US women still earned only 83 \% of what
men did at the median, see, e.g., \citet{England et al}. Standard wage
regressions are typically able to explain only half of this earnings gap, but
more detailed analysis reveals more insights. In particular, several authors
have found that the inclusion of individual Armed Forces Qualifying Test
scores is able to almost fill the wage gap on race. On the one hand, this
suggests that the pay gap between whites and blacks might be largely created
before individuals enter the labor market, though complementary channels
cannot be fully excluded. On the other hand, it is clear that the gender wage
gap cannot be fully accounted for by pre-market factors, as women have
nowadays caught up and even got ahead men in terms of education levels, see,
e.g., \citet{BlauKahn2017}, \citet{England et al};\ for a more detailed
analysis on the difficulty of explaining the gender pay gap with both typical
and many atypical observables, see, e.g., \citet{ManningSwaffield}.

Much research within social sciences suggests that segregation into separate
type of jobs, i.e. occupational segregation, explains a large part of the
gender wage gap, as well as part of the race wage gap. One major candidate for
the residual wage gap must be the fact that women are underrepresented in
higher-paying, typically male-dominated occupations: see, e.g.,
\citet{MacphersonHirsch}, \citet{Bayard et al}, \citet{Goldin 2014}. There are plenty of studies presenting detailed \textit{direct}
statistics on the occupational segregation\footnote{Some of these papers, e.g.,
\citet{Sorensen 2004}, discuss in detail the extent of labor market
segregation between social groups, at the \textit{workplace, industry and
occupation }levels. Here we are concerned with modeling segregation by
\textit{occupation} alone (known also as "horizontal segregation").
Remark that  our stylized model may also explain particular inter-industry disparities, 
but we emphasize the occupational dimension, since this appears to be dominant relative 
to segregation by industry: e.g., \citet{WeedenSorensen}\ show that occupational segregation in the
USA\ is much stronger than segregation by industry and that if one wishes to
focus on one single dimension, \textquotedblleft occupation is a good choice,
at least relative to industry\textquotedblright.} and wage inequality
patterns by gender, race or ethnicity, see, e.g., \citet{beller},
\citet{albelda}, \citet{King 1992}, \citet{PadavicReskin}, \citet{charles},
\citet{Cotter et al}, \citet{BlauKahn2017}, \citet{England et al}. They all
agree that, despite substantial expansion in the labor market participation of
women and affirmative action programs aimed at labor integration of racial and
ethnic minorities, women typically remain clustered in female-dominated
occupations, while blacks (as well as some other races or ethnic groups) are
over-represented in some occupations and under-represented in others. Moreover, the occupations where women or blacks segregate
are usually\ of lower 'quality', meaning \textit{inter alia }that
they are paying less on average, which partly explains the male-female and
white-black wage differentials.

\citet{King 1992} offers, for instance, detailed evidence that throughout
1940-1988 there was a persistent and remarkable level of occupational
segregation by race and sex, such that \textquotedblleft approximately
two-thirds of men or women would have to change jobs to achieve complete
gender integration\textquotedblright, with some changes in time for some
subgroups. Whereas occupational segregation between white and black women
appears to have diminished during the 60's and the 70's, occupational
segregation between white and black males or between males and females
remained remarkable stable. Several studies by Barbara Reskin and her
co-authors, c.f. the discussion and references in \citet{PadavicReskin},
document the extent of occupational segregation by narrow race-sex-ethnic
cells and find that segregation by gender remained extremely prevalent and
that within occupations segregated by gender, racial and ethnic groups are
also aligned along stable segregation paths. \citet{England et al} summarize
the trends in the segregation of occupations by means of an occupational
segregation index ranging from 0 (complete integration) to 1 (complete
segregation), which "has fallen steadily since 1970[...] moving from 0.60 to
0.42. However, it moved much faster in the 1970s and 1980s than it has since
1990; segregation dropped by 0.12 in the 20-y period after 1970, but by a much
smaller 0.05 in the longer 26-y period after 1990." Though most of these
studies are for the USA, there is also international evidence confirming that,
with some variations, similar patterns of segregation hold, e.g.,
\citet{PettitHook} or, especially, \citet{EIGE 2020} -- which concludes among others
that although the gap between male and female employment rates in EU countries has
narrowed, labor markets remain highly segregated, about 30 \% of employed
women working in sectors relatively poorly paid and of lower work quality,
such as education, health or social work.\footnote{The exact wording in
\citet{EIGE 2020}:\ "The progress on women's participation
has not led to substantial changes to gendered patterns of employment in the
labour market. The Index score for work quality and segregation has scarcely
changed since 2010, standing at 64 points in 2018. Around 30 \% of all employed 
worked in education, health and social work activities, compared with 8 \% of men. 
Other sectors and occupations remain dominated by men: for example,
only 17 \% of ICT specialists are women".}

The recent studies by \citet{England et al} on US\ exhaustive CPS data and
respectively \citet{EIGE 2020} on EU-level data are in fact doubly relevant
for our purpose. Next to demonstrating, as already mentioned above, that
despite some progress, the convergence on gender pay gaps, employment gaps and
occupational desegregation has slowed down, they also show that the gender
desegregation by fields of study has stalled --- and that despite that women
by now overtook men in terms of both bachelor and doctoral degrees. Using a
desegregation index for fields of study across genders, similar to the one for
occupations, \citet{England et al} show that "[for bachelor degrees] it fell
from 0.47 in 1970 to 0.28 in 1998, and has not gone down since, but rather,
segregation has risen slightly. For doctoral degrees, segregation went from
0.35 in 1970 to a low of 0.18 in 1987 and has hovered slightly higher since.
In neither case has there been any net reduction in segregation for over 20
years." As the authors of that study also recognize, this is extremely
important, as "segregated fields of study contribute to occupational gender
segregation." At the same time, one of \citet{EIGE 2020}'s conclusions is
very pertinently worded as follows: "Gender segregation in education remains a
major barrier to gender equality in the EU. In 2017, 43 \% of all women at
university were studying education, health and welfare, humanities or the
arts, with the gender gap in the EU as a whole standing at 22 p.p., unchanged
since 2010. This divide is mirrored by gender segregation in the labour
market, which determines women's and men's earnings, career prospects and
working conditions." In the main part of our paper, we provide a mechanism of
exactly how that can happen and how that can then lead to persistent
inequality in wages and employment.

\subsection{Job contact networks}

\label{secsebi2_jcn}

There is by now an established set of facts showing the importance of the
informal job networks in matching job seekers to vacancies. For instance, on
average about 50 \% of the workers obtain jobs through their personal
contacts, e.g. \citet{Rees 1966}, \citet{Granovetter 1995},
\citet{Holzer 1987}, \citet{Montgomery 1991}, \citet{Topa 2001};
\citet{bewley} enumerates several studies published before the 90's, where the
fraction of jobs obtained via friends or relatives ranges between 30 
and 60 \%.\footnote{The difference in the use of informal job networks among
professions is also documented. \citet{Granovetter 1995}\ pointed out that
although personal ties are relevant in job search-match for all
professions, their incidence is higher for blue-collar (50 to 65
\%)\ than for white-collar categories such as accountants or typists (20
to 40 \%). However, for other white-collars the use of
social connections in job finding is even higher, e.g., as
high as 77 \% for academics.} It is also established that on
average 40-50 \% of the employers actively use social networks of their
current employees to fill their job openings, e.g. \citet{Holzer 1987}.
Furthermore, employer-employee matches obtained via contacts appear to have
some common characteristics. Those who found jobs through personal contacts
were on average more satisfied with their job, e.g., \citet{Granovetter 1995},
and were less likely to quit, e.g. \citet{Datcher}, 
\citet{SimonWarner}, \citet{DatcherLoury}. For more detailed overviews of
studies on job information networks, see \citet{IoannidesDatcher}\ or
\citet{Topa 2011}; for more recent empirical research on the influence and
value of job referral networks, see, e.g., \citet{Bayer et al},
\citet{Hellerstein et al}, \citet{Burksetal}, or the very recent review by \citet{Jackson 2022}.

\subsection{Intra-group homophily}

\label{secsebi2_hom}

There is considerable evidence on the existence of social \textquotedblleft
homophily\textquotedblright,\footnote{The "homophily theory" of
friendship was first introduced and popularized by sociologists
\citet{LazarsfeldMerton}, with \citet{Coleman58} introducing "inbreeding
homophily" indices, and the notion extensively used in sociology ever since.
In economics, the notion got popular much later, in the second half of the
2000s, but since then the literature grew massively, see, e.g, the discussion and references in the very recent papers by \citet{Bolteetal} or \citet{Jackson 2022} } also labeled \textquotedblleft assortative
matching\textquotedblright\ or \textquotedblleft inbreeding social
bias\textquotedblright, that is, there is a higher probability of establishing
links among people with similar characteristics. Extensive research shows that
people tend to be friends with similar others, see, e.g., \citet{McPherson et al} for
a good review, with characteristics such as race, ethnicity or gender being
essential dimensions of homophily. Friendship
patterns appear to be more homophilous than would be expected by chance or availability
constraints \textit{even after controlling for the unequal distribution of races or
sexes through social structure}, extensively and repeatedly documented at least since \citet{Shrum et al}. 

In our "job information network" context, early studies by \citet{Rees 1966}
and \citet{Doeringer and Piore} showed that workers who had been asked for
references concerning new hires were in general very likely to refer people
"similar" to themselves. While such similar features could be ability, education,
age, race and so on, the focus here is on groups
stratified along exogenous characteristics (i.e., one is born in such a group
and cannot alter her group membership) such as those divided along gender,
race or ethnicity lines. Indeed, most subsequent evidence on homophily was in
the context of such 'exogenously given' social groups. For instance,
\citet{Marsden 1987} finds using the U.S. General Social Survey that personal
contact networks tend to be highly segregated by race, while other studies
such as \citet{Brass85} or \citet{Ibarra},\ using cross-sectional single firm
data, find significant gender segregation in personal networks.\ More recent
evidence on various homophilous social networks is also given by
\citet{MayerPuller}, \citet{Currarini et al}, or Zeltzer (2020).

Direct evidence of large gender homophily within\textit{ job contact networks
}comes, for instance, from tabulations by \citet{Montgomery 1992}. Over all occupations in a
US sample from the National Longitudinal Study of Youth, 87 \% of the
jobs men obtained through contacts were based on information received from
other men and 70 \% of the jobs obtained informally by women were as
result of information from other women. Montgomery shows that these outcomes
hold even when looking at each narrowly defined occupation categories or
one-digit industries,\footnote{\citet{WeedenSorensen} estimate a two-dimensional model
of gender segregation, by industry \textit{and} occupation: they find much
stronger segregation across occupations than across industries.\ 86 \% of the
total association in the data is explained by the segregation along the
occupational dimension; this increases to about 93 \% once industry segregation
is also accounted for.} including traditionally male or female
dominated occupations, where job referrals for the minority group members were
obtained still with a very strong assortative matching via their own gender
group. For example, in male-dominated occupations such as machine operators,
81 \% of the women who found their job through a referral, had a female
reference. Such figures are surprisingly large and are likely to be only lower
bounds for magnitudes of inbreeding biases within other social
groups.\footnote{The gender homophily is likely to be smaller than race or
ethnic homophily, given frequent close-knit relationships between men and
women.\ This is confirmed for instance by \citet{Marsden 1988}, who finds
strong inbreeding biases in contacts between individuals of the same race or
ethnicity, but less pronounced homophily within gender categories.}

Another direct evidence for our purpose is the study by
\citet{Fernandez and
Sosa}, who use a dataset documenting both the recruitment and
the hiring stages for an entry-level job at a call center of a large
US bank. This study also finds that contact networks contribute to the gender
skewing of jobs, in addition documenting directly that there is strong
evidence of gender homophily in the refereeing process:\ referees of both
genders tend to strongly produce same sex referrals.

Some more recent studies are, for instance, by \citet{Beaman et al} who ask job applicants to a research organization in Malawi to refer other candidates and find that among referred applicants female applicants are more likely than male applicants to have been referred by a woman; and by \citet{Brown et al} who analyze data on applicants and hires at a U.S. corporation, finding that
referrers and referrals tend to be similar in terms of age, gender, ethnicity and education. In their data, 64 \% of the referral matches are between people of the same gender. Furthermore, an interesting and very recent study by \citet{Hederos et al} estimates a measure of gender homophily in job referrals, largely unaffected by the behavior of the referred candidates and the outcome of the hiring process, by directly observing the referral process of Business students in a Stockholm higher-education establishment. Inter alia, they find strong evidence of gender homophily in job referrals, with 73 \% of participants referring a candidate of their own gender — and that homophily being equally strong within the men and women student subgroups. 

Finally, we want to briefly address the relative importance of homophily within
"exogenously given" versus "endogenously created" social groups. As mentioned
above, assortative matching takes place along a great variety of dimensions.
However, there is also an empirical literature suggesting that homophily within
exogenous groups such as those divided by race, ethnicity, gender, and- to a
certain extent- religion, typically outweighs assortative matching within
endogenously formed groups such as those stratified by educational, political
or economic lines. E.g., \citet{Marsden 1988} finds for US strong inbreeding
bias in contacts between individuals of the same race or ethnicity and less
pronounced homophily by education level. Another study by
\citet{Tampubolon},\ using UK data, documents the dynamics of friendship as
strongly affected by gender, marital status and age, but not by education, and
only marginally by social class. These facts partly motivate why our focus here is on
"naturally" arising social groups, such as gender, racial or ethnic
ones. Nevertheless, as will become clear in the modeling, \textit{allowing
assortative matching by education, in addition to gender, racial or
ethnic homophily, does not matter at all for our results and conclusions}.

\section{A model of occupational segregation}

\label{secsebi3}

Based on the stylized facts mentioned in Section \ref{secsebi2}, we build a
parsimonious theoretical model of social network interaction able to explain
stable occupational segregation and employment and wage gaps, that can
complement existing theories and thus potentially account for the remaining unexplained disparities in
labor market outcomes between genders or races.

Let us consider the following setup. A continuum of individuals with measure 1
is equally divided into two social groups, Reds ($R$) and Greens ($G$). The
individuals are ex ante homogeneous apart from their social color. They can
work in two occupations, $A$ or $B$. Each occupation requires a corresponding
thorough specialized education (career track), such that a worker cannot work
in it unless she followed that education track. We assume that it is too
costly for individuals to follow both educational tracks. Hence, individuals
have to choose their education track before they enter the labor
market.\footnote{For example, graduating high school students may face the
choice of pursuing a medical career or a career in technology. Both choices
require several years of expensive specialized training, and this makes it
unfeasible to follow both career tracks.}

Consider now the following order of events:

\begin{enumerate}
\item Individuals choose one education in order to specialize in either
occupation $A$ or $B$;

\item Individuals randomly establish \textquotedblleft
friendship\textquotedblright\ relationships, thus forming a network of contacts;

\item Individuals participate in the labor market. Individual $i$ obtains a
job with probability $s^{i}$.

\item Individuals produce a single good for their firms and earn a wage
$w^{i}$. They obtain utility from consuming goods that they buy with their wage.
\end{enumerate}

We proceed with an elaboration of these steps.

\subsection{Education strategy and equilibrium concept}

The choice of education in the first stage involves strategic behavior.
Workers choose the education that maximizes their expected payoff given the
choices of other workers, and we therefore look for a Nash equilibrium in this
stage. This can be formalized as follows.

Denote by $\mu_{R}$ and $\mu_{G}$ the fractions of Reds and respectively
Greens that choose education $A$. It follows that $1-\mu_{X}$ of group
$X\in\{R,G\}$ chooses education $B$. The payoffs will depend on these
strategies:\ the payoff of a worker of group $X$ that chooses education $A$ is
given by $\Pi_{A}^{X}(\mu_{R},\mu_{G})$, and mutatis mutandis, $\Pi_{B}%
^{X}(\mu_{R},\mu_{G})$. Define $\Delta\Pi^{X}\equiv\Pi_{A}^{X}-\Pi_{B}^{X}$.
The functional form of the payoffs is made more specific later, in Subsection
\ref{wages_payoffs}.

In a Nash equilibrium each worker chooses the education that gives her the
highest payoff, given the education choices of all other workers. Since
workers of the same social group are homogenous, a Nash equilibrium implies
that if some worker in a group chooses education $A$ ($B$), then no other
worker in the same group should strictly prefer education $B$ ($A$). Hence, we define a
pair $(\mu_{R},\mu_{G})$ an \emph{equilibrium} if and only if, for
$X\in\{R,G\}$, the following hold:\footnote{The question whether the
equilibrium is in pure or mixed strategies is not relevant, because the player
set is a measure of identical infinitesimal individuals (except for group
membership). Our equilibrium could be interpreted as a Nash equilibrium in
pure strategies; then $\mu_{X}$ is the measure of players in group $X$
choosing pure strategy $A$. The equilibrium could also be interpreted as a
symmetric Nash equilibrium in mixed strategies; in that case the common
strategy of all players in group $X$ is to play $A$ with probability $\mu_{X}%
$. A hybrid interpretation is also possible.}
\begin{align}
\Delta\Pi^{X}(\mu_{R},\mu_{G})\leq0  &  \mbox{ if }\mu_{X}=0 \label{eqcondmu0}%
\\
\Delta\Pi^{X}(\mu_{R},\mu_{G})=0  &  \mbox{ if }0<\mu_{X}<1\label{eqcondmu}\\
\Delta\Pi^{X}(\mu_{R},\mu_{G})\geq0  &  \mbox{ if }\mu_{X}=1.
\label{eqcondmu1}%
\end{align}

To strengthen the equilibrium concept, we restrict ourselves to \textit{stable
}equilibria. We use a simple stability concept based on a standard myopic
adjustment process of strategies, which takes place before the education
decision is made. That is, we think of the equilibrium as the outcome of an
adjustment process. In this process, individuals repeatedly announce their
preferred education choice, and more and more workers revise their education
choice if it is profitable to do so, given the choice of the other
workers.\footnote{One could think of such a process as the discussions
students have before the end of the high school about their preferred career.
An alternative with a longer horizon is an overlapping generations model, in
which the education choice of each new generation partly depends on the choice
of the previous generation.} Concretely, we consider stationary points
of a dynamic system guided by the differential equation $\dot{\mu}_{X}%
=k\Delta\Pi^{X}(\mu_{R},\mu_{G})$. We define $\mu\equiv(\mu_{R},\mu_{G})$ a
\textit{stable equilibrium} if and only if it is an equilibrium and (i) for
$X\in\{R,G\}$: $\partial\Delta\Pi^{X}/\partial\mu_{X}<0$ if $\Delta\Pi^{X}=0$;
(ii) det$(D\Delta\Pi(\mu))>0$ if $\Delta\Pi^{R}=0$ and $\Delta\Pi^{G}=0$,
where $D\Delta\Pi(\mu)$ is the Jacobian of $(\Delta\Pi^{R},\Delta\Pi^{G})$
with respect to $\mu$.

We assumed that individuals first choose an education, and then form a network
of job contacts (see the next subsection). As a consequence, individuals have to
make \emph{expectations} about the network they could form, and base their
education decisions on these expectations. This is in contrast to some of the
earlier work on the role of networks in the labor market. In that research,
the network was supposed to be already in place, or the network was formed in
the first stage (\citet{Montgomery 1991}, \citet{calvo}, \citet{calvojackson1}).

Our departure from the earlier frameworks raises legitimate questions about
the assumed timing of the education choice. Are crucial career decisions made
before or after job contacts are formed? One might be tempted to answer: both.
Of course everyone is born with family ties, and both in early school and in the
neighborhood children form more ties, etc. It is also known that peer-group
pressure among children has a strong effect on decisions to, for instance,
smoke or engage in criminal activities and, no doubt, family and early friends
do form a non-negligible source of influence when making crucial career
decisions. However, we argue that most \textit{job-relevant contacts} (the so
called 'instrumental ties') are made later, for instance at the\ university,
or early at the workplace, hence after a specialized career track had been
chosen. In spite of the fact that those ties are typically not as strong as
family ties, they are more likely to provide relevant information on vacancies
to job seekers; \citet{Granovetter 1973,Granovetter 1995} and much subsequent literature provide convincing
evidence that job seekers more often receive crucial job information from
acquaintances ("weak ties"),\ rather than from early family or close childhood friends
("strong ties"). If a majority of instrumental ties are formed
after the individual embarked on a (irreversible)\ career, then it is
justified to consider a model in which the job contact network is formed after
making a career choice.

\subsection{Network formation}

In the second stage, individuals form a network of contacts. We assume this
network to be random, but with \textit{social color homophily}. That is, we assume that
the probability for two workers to create a tie is $p\geq0$ when the
workers are from different social groups and follow different education
tracks; however, when the workers are from the same social group, the
probability of creating a tie increases with $\lambda>0$. Similarly, if two
workers choose the same education, then the probability of creating a tie
increases with $\kappa\geq0$. Hence, \textit{we also allow for assortative matching by
education}, in addition to the one by social color. We do not impose any
further restrictions on these parameters, other than securing $p+\lambda
+\kappa\leq1.$ This leads to the tie formation probabilities from
Table~\ref{tab:tieformation}. We shall refer to two workers that create a tie
as \textquotedblleft friends\textquotedblright.

\begin{table}[ptb]
\caption{Probability of a tie between two individuals, depending on group membership and education choice.}%
\label{tab:tieformation}
\begin{center}%
\begin{tabular}
[c]{|lr|cc|}\hline\hline
&  & \multicolumn{2}{c|}{Education}\\
&  & same & different\\\hline
Social & same & $p+\kappa+\lambda$ & $p+\lambda$\\
group &  &  & \\
& different & $p+\kappa$ & $p$\\\hline\hline
\end{tabular}
\end{center}
\end{table}

We assume the probability that an individual $i$ forms a tie with individual
$j$ to be exogenously given and constant. In practice, establishing a
friendship between two individuals typically involves rational decision
making. It is therefore plausible that individuals try to optimize their job
contact network in order to maximize their chances on the labor
market.\footnote{See \citet{calvo} for a model of strategic network
formation in the labor market.} In particular, individuals from the
disadvantaged social groups should have an incentive to form ties with
individuals from the advantaged group. While this argument is probably true,
we do not incorporate this aspect of network formation in our model. The
reality is that strategic network formation does not appear to dampen the
inbreeding bias in social networks significantly;\ in Section~\ref{secsebi2_hom}
we have provided evidence that strong homophily exists even within groups that have
strong labor market incentives \emph{not} to preserve such homophily in
forming their ties. The reason could be that the payoff of forming a tie is
mainly determined by various social and cultural factors, and only for a
smaller part by benefits from the potential transmission of valuable job
information.\footnote{\citet{Currarini et al} discuss a model of network
formation in which individuals form preferences on the number and mix of
same-group and other-group friends. In this model inbreeding homophily arises
endogenously.} On top of that, studies such as, e.g.,
\citet{Granovetter 2002}, also note that many people would feel exploited if
they found out that someone befriended them for the selfish reason of obtaining
job information. These elements might hinder the role of labor market
incentives when forming ties. Hence, while we do not doubt that incentives do
play a role when forming ties, we believe that they are not sufficiently strong
to undo the effects of the social color homophily. We thus consider that endogenizing network formation
in this particular setup would not alter--while needlessly obscuring--the gist of our current analysis.

\subsection{Job matching and social networks}

The third stage we envision for this model is that of a dynamic labor process,
in which information on vacancies is propagated through the social network, as
in, e.g., \citet{calvojackson1}, \citet{calvozenou},
\citet{IoannidesSoutevent} or \citet{bramoullesp}. Workers who randomly lose
their job are initially unemployed because it takes time to find information
on new jobs. The unemployed worker receives such information either directly,
through formal search, or indirectly, through employed friends who receive the
information and pass it on to her (in the particular case where all her
friends are unemployed, only the formal search method works). As the specific
details of such a process are irrelevant for our purposes, we do not
consider the dynamic aspects explicitly, taking a "reduced form" approach that highlights our model's main mechanisms.

In particular, we assume that unemployed workers have a higher propensity to
receive job information when they have more friends with the \emph{same job
background}, that is, with the \emph{same choice of education}. On the one
hand, this assumption is based on the result of \citet{IoannidesSoutevent}
that in a random network setting the individuals with more friends have a
lower unemployment rate.\footnote{This result is nontrivial, as the unemployed
friends of employed individuals tend to compete with each other for job
information. Thus, if a friend of a jobseeker has more friends, the
probability that this friend passes information to the jobseeker decreases. In
fact, in a setting in which everyone has the same number of friends,
\citet{calvozenou} show that the unemployment rate is non-monotonic in the
(common) number of friends. Remark also that our modeling abstracts from employment probability non-monotonicity concerns potentially arising from limited total supply of jobs in an occupation. It is not clear that such a constraint is extremely relevant in practice, however.} On the other hand, this assumption is
based on the conjecture that workers are more likely to receive information
about jobs in their own occupation. For example, when a vacancy is opened in a
team, the other team members are the first to know this information, and are
also the ones that have the highest incentives to spread this information around.

Formally, denote the probability that individual $i$ becomes employed by
$s^{i}=s(x_{i})$, where $x_{i}$ is the measure of friends of $i$ with the same
education as $i$ has. We thus assume that $s(x)$ is differentiable, $0<s(0)<1$
(there is non-zero amount of direct job search) and $s^{\prime}(x)>0$ for all
$x>0$ (the probability of being employed increases in the number of friends
with the same education).

It is instructive to show how $s^{i}$ depends on the education choices of $i$
and the choices of all other workers. Remember that the population is equally divided in Reds and Greens and that $\mu_{R}$ and $\mu_{G}$
are the fractions of Reds and respectively Greens that choose education $A$.
Given the tie formation probabilities from Table~\ref{tab:tieformation} and
some algebra, the employment rate $s_{A}^{X}$ of $A$-educated workers in social group~$X\in
\{R,G\}$ will be given by:
\begin{equation}
s_{A}^{X}(\mu_{R},\mu_{G})=s\left(  (p+\kappa)\bar{\mu}+\lambda\mu
_{X}/2\right)  \label{eqsax}%
\end{equation}
and likewise, the employment rate $s_{B}^{X}$ of $B$-educated workers in social group~$X$ will
be
\begin{equation}
s_{B}^{X}(\mu_{R},\mu_{G})=s\left(  (p+\kappa)(1-\bar{\mu})+\lambda(1-\mu
_{X})/2\right)  \label{eqsbx}%
\end{equation}
where $\bar{\mu}\equiv(\mu_{R}+\mu_{G})/2$.

Note that $s_{A}^{X}>s_{A}^{Y}$ and $s_{B}^{X}<s_{B}^{Y}$ for $X,Y\in\{R,G\}$,
$X\neq Y$, if and only if $\mu_{X}>\mu_{Y}$ and $\lambda>0$. We will see in
Section~\ref{secsebi4} that the ranking of the employment rates is crucial, as
it creates a group-specific network effect. That is, keeping this ordering, if
only employment matters (jobs are equally attractive), then individuals have
an incentive to choose the same education as other individuals in \emph{their}
social group. Importantly, it is straightforward to see that this ordering of
the employment rates depends on $\lambda$, but it does not depend on $\kappa$.
Thus, only the homophily among members of the same social group -- and not
the assortative matching by education -- is relevant for our results.

\subsection{Wages, consumption and payoffs}

\label{wages_payoffs}

The eventual payoff of the workers depends on the wage they receive, the goods
they buy with that wage, and the utility they derive from consumption. Without
loss of generality we assume that an unemployed worker receives zero wage.
However, the wages of employed workers are not exogenously given, but they are
determined by supply and demand.

When firms offer wages, they take into account that there are labor market
frictions and that it is impossible to employ all workers simultaneously. Thus
what matters is the effective supply of labor as determined by the labor
market process in stage 3. Let $L_{A}$ be the total measure of employed
$A$-educated workers and $L_{B}$ be the total measure of employed $B$-educated workers. Hence,
\begin{equation}
L_{A}(\mu_{R},\mu_{G})=\mu_{R}s_{A}^{R}(\mu_{R},\mu_{G})/2+\mu_{G}s_{A}%
^{G}(\mu_{R},\mu_{G})/2 \label{eqla}%
\end{equation}
and
\begin{equation}
L_{B}(\mu_{R},\mu_{G})=(1-\mu_{R})s_{B}^{R}(\mu_{R},\mu_{G})/2+(1-\mu
_{G})s_{B}^{G}(\mu_{R},\mu_{G})/2. \label{eqlb}%
\end{equation}
Given (\ref{eqsax}) and (\ref{eqsbx}) from above, it is easy to check that
$L_{A}$ is increasing with $\mu_{R}$ and $\mu_{G}$, whereas $L_{B}$ is
decreasing with $\mu_{R}$, $\mu_{G}$.

As in \citet{benabou}, consumption, prices, utility, the demand for labor and
the implied wages are determined in a 1-good, 2-factor general equilibrium
model. All individuals have the same utility function $U:R_{+}\rightarrow R$,
which is strictly increasing and strictly concave with $U(0)=0$. The single
consumer good sells at unit price, such that consumption of this good equals
wage and indirect utility is given by $U_{i}=U(w_{i})$.

Whereas in \citet{benabou} firms combine high- and low-skilled workers, here firms put $A$-educated and $B$-educated workers together to produce the single good at
constant returns to scale. Wages are then determined by the production
function $F(L_{A},L_{B})$. As usually, we assume that $F$ is strictly
increasing and strictly concave in $L_{A}$ and $L_{B}$ and $\partial
^{2}F/\partial L_{A}\partial L_{B}>0$. Writing the wage as function of
education choices and using (\ref{eqla}) and (\ref{eqlb}), the wages of
$A$-educated and $B$-educated workers, $w_{A}$ and $w_{B}$, are given by
\[
w_{A}(\mu_{R},\mu_{G})=\frac{\partial F}{\partial L_{A}}\left(  L_{A}(\mu
_{R},\mu_{G}),L_{B}(\mu_{R},\mu_{G})\right)  ,
\]
and
\[
w_{B}(\mu_{R},\mu_{G})=\frac{\partial F}{\partial L_{B}}\left(  L_{A}(\mu
_{R},\mu_{G}),L_{B}(\mu_{R},\mu_{G})\right)  .
\]
It is easy to check that $w_{A}$ is strictly decreasing with $\mu_{R}$ and
$\mu_{G}$, and mutatis mutandis, $w_{B}$.

We can now define the payoff of a worker as her expected utility at the time
of decision-making. The payoff function of an $A$-educated worker from social
group $X\in\{R,G\}$ is thus
\begin{equation}
\Pi_{A}^{X}(\mu_{R},\mu_{G})=s_{A}^{X}(\mu_{R},\mu_{G})U(w_{A}(\mu_{R},\mu
_{G})). \label{pi_r_a}%
\end{equation}
Similarly,
\begin{equation}
\Pi_{B}^{X}(\mu_{R},\mu_{G})=s_{B}^{X}(\mu_{R},\mu_{G})U(w_{B}(\mu_{R},\mu
_{G})). \label{pi_r_b}%
\end{equation}

If we do not impose further restrictions, then there could be multiple
equilibria, most of them uninteresting. To ensure a unique equilibrium in our
model (actually: two symmetric equilibria), we make the following two assumptions.

\begin{assumption}
For the wage functions $w_{A}$ and $w_{B}$
\[
\lim_{x\downarrow0} U(w_{A}(x,x))= \lim_{x\downarrow0} U(w_{B}
(1-x,1-x))=\infty.
\]
\label{asswage}
\end{assumption}

\begin{assumption}
For $X \in\{R,G\}$, and for all $\mu_{R}, \mu_{G}\in[0,1]$
\[
\left|  \frac{\partial s_{A}^{X}/s_{A}^{X}}{\partial\mu_{X}/\mu_{X}}\ \right|
< \left|  \frac{\partial U/U}{\partial w_{A}/w_{A}} \right|  \left|
\frac{\partial w_{A}/w_{A}}{\partial\mu_{X}/\mu_{X}} \right|
\]
and
\[
\left|  \frac{\partial s_{B}^{X}/s_{B}^{X}}{\partial\mu_{X}/\mu_{X}}\ \right|
< \left|  \frac{\partial U/U}{\partial w_{B}/w_{B}} \right|  \left|
\frac{\partial w_{B}/w_{B}}{\partial\mu_{X}/\mu_{X}} \right|  .
\]
\label{assemployment}
\end{assumption}

Assumptions \ref{asswage} and \ref{assemployment} guarantee the uniqueness of
our results. Assumption~\ref{asswage} implies that the wage for scarce labor
is so high that at least some workers always find it attractive to choose
education $A$ or respectively $B$; everyone going for one of the two
educations cannot be an equilibrium. In Assumption \ref{assemployment} we
assume that the education choice of an individual has a smaller marginal
effect on the employment probability within a group than on the wages and
overall utility. Note that the assumption implies that for $X\in\{R,G\}$
\[
\frac{\partial\Pi_{A}^{X}}{\partial\mu_{X}}<0<\frac{\partial\Pi_{B}^{X}%
}{\partial\mu_{X}},
\]
and it is this feature that guarantees the uniqueness of our results. The
assumption is not restrictive as long as there is sufficient direct job
search, because the employment probability of each individual in our model is
bounded between $s(0)>0$ and $1$, with $s(0)$ capturing the employment
probability in the absence of any ties and thus induced only by the
exogenously given direct job finding rate. Hence, a higher $s(0)$ implies less
of an impact of the network effect on the employment rate.

Remark that we make these assumptions above only to focus
our analysis on segregation outcomes, for the sake of clarity and brevity.
\textit{These assumptions are not necessary}. For instance, in the calibration from
Section~\ref{secsim}, Assumption 2 is violated, but there are still (two)
unique equilibria.

\section{Equilibrium results}

\label{sec_equilibrium}

We now present the equilibrium analysis of our model. Formal proofs for all
subsequent propositions are relegated to the "Proofs" appendix. Without loss of
generality we assume throughout the section that $w_{A}(1,0)\geq w_{B}(1,0)$,
thus that the $A$-occupation is weakly more attractive than the $B$-occupation
when effective labor supply is the same. We call $A$ the \textquotedblleft
good\textquotedblright\, and $B$ the \textquotedblleft
bad\textquotedblright\ job.

\subsection{Occupational segregation}

\label{secsebi4}

We are in particular interested in those equilibria in which there is
segregation. We define \emph{complete segregation} if $\mu_{R}=0$ and $\mu
_{G}=1$, or, vice versa, $\mu_{R}=1$ and $\mu_{G}=0$. On the other hand, we
say that there is \emph{partial segregation} if for $X\in\{R,G\}$ and
$Y\in\{R,G\}$, $Y\neq X$: $\mu_{X}=0$ but $\mu_{Y}<1$, or, vice versa,
$\mu_{X}=1$ but $\mu_{Y}>0$.

Our first result is that segregation, either complete or partial, is the only
stable outcome:

\begin{proposition}
\label{propsegregation} Suppose Assumptions~\ref{asswage} and
\ref{assemployment} hold. Define $s_{H} \equiv s((p+\kappa+\lambda)/2)$ and
$s_{L} \equiv s((p+\kappa)/2)$.

\begin{enumerate}
\item[(i)] If
\begin{equation}
1 \leq\frac{U(w_{A}(1,0))}{U(w_{B}(1,0))} \leq\frac{s_{H}}{s_{L}},
\label{prop3cond1}%
\end{equation}
then there are exactly two stable equilibria, both with complete segregation.

\item[(ii)] If
\begin{equation}
\frac{U(w_{A}(1,0))}{U(w_{B}(1,0))} > \frac{s_{H}}{s_{L}}, \label{prop3cond2}%
\end{equation}
then there are exactly two stable equilibria, both with partial segregation,
in which either $\mu_{R}=1$ or $\mu_{G}=1$.
\end{enumerate}
\end{proposition}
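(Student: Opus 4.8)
The plan is to exploit two structural features to reduce a two-dimensional fixed-point problem to one-dimensional monotone arguments. The first is that Assumption~\ref{assemployment} forces $\partial\Delta\Pi^X/\partial\mu_X<0$ throughout $[0,1]^2$ (as already noted after that assumption), so for each value of the other group's strategy $\Delta\Pi^X$ is strictly decreasing in $\mu_X$. The second is that the model is invariant under the relabelling $R\leftrightarrow G$, $\mu_R\leftrightarrow\mu_G$, so $L_A,L_B,w_A,w_B$ are symmetric functions of $(\mu_R,\mu_G)$, $\Delta\Pi^R(\mu_R,\mu_G)=\Delta\Pi^G(\mu_G,\mu_R)$, and equilibria occur in mirror pairs. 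First I would record the boundary behaviour implied by Assumption~\ref{asswage}: since $w_A$ is decreasing in both arguments, $U(w_A(\mu_R,\mu_G))\ge U(w_A(x,x))$ whenever $\mu_R,\mu_G\le x$, so $U(w_A)\to\infty$ as $(\mu_R,\mu_G)\to(0,0)$; as $s_B^X\le 1$ and $U(w_B)$ stays bounded there while $s_A^X\ge s(0)>0$, this gives $\Delta\Pi^X\to+\infty$ near $(0,0)$, and symmetrically $\Delta\Pi^X\to-\infty$ near $(1,1)$. Hence $(0,0)$ and $(1,1)$ are not equilibria, and $\mu\mapsto\Delta\Pi^R(\mu,\mu)$ runs from $+\infty$ to $-\infty$, so at least one symmetric interior equilibrium $(\mu^*,\mu^*)$ exists. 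I would also evaluate the corner via (\ref{eqsax})--(\ref{eqsbx}): $s_A^R(1,0)=s_B^G(1,0)=s_H$, $s_B^R(1,0)=s_A^G(1,0)=s_L$, $L_A(1,0)=L_B(1,0)=s_H/2>0$, and by the section's standing assumption $w_A(1,0)\ge w_B(1,0)>0$.

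The core step classifies all equilibria by a cross-group comparison. Take, without loss of generality, a candidate equilibrium with $\mu_R\ge\mu_G$. If $\mu_R>\mu_G$, then (by the inequality recorded after (\ref{eqsbx}), using $\lambda>0$) $s_A^R>s_A^G$ and $s_B^R<s_B^G$; since both groups face the same wages, $\Pi_A^R>\Pi_A^G$ and $\Pi_B^R<\Pi_B^G$, hence $\Delta\Pi^R>\Delta\Pi^G$ at this point. I would also note that $\Delta\Pi^R(\cdot,0)$ is strictly decreasing and $\Delta\Pi^R(1,0)=s_HU(w_A(1,0))-s_LU(w_B(1,0))\ge(s_H-s_L)U(w_B(1,0))>0$, so $\Delta\Pi^R(\mu_R,0)>0$ for every $\mu_R$. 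Then: if $\mu_G=0$, this positivity and $R$'s equilibrium condition force $\mu_R=1$; if $\mu_G\in(0,1)$, $G$'s condition gives $\Delta\Pi^G=0$, hence $\Delta\Pi^R>0$ and again $\mu_R=1$ ($\mu_G=1$ being incompatible with $\mu_R>\mu_G$). So the only candidates with $\mu_R>\mu_G$ are $(1,0)$ and $(1,\mu_G)$, $\mu_G\in(0,1)$. For $(1,0)$, $R$'s condition is automatic and $G$'s condition reads $\Delta\Pi^G(1,0)=s_LU(w_A(1,0))-s_HU(w_B(1,0))\le 0$, i.e.\ exactly (\ref{prop3cond1}) (its left inequality being the standing assumption restated). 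For $(1,\mu_G)$, $R$'s condition is automatic since $\Delta\Pi^R(1,\mu_G)>\Delta\Pi^G(1,\mu_G)$, while $\mu_G\mapsto\Delta\Pi^G(1,\mu_G)$ is strictly decreasing, equals $s_LU(w_A(1,0))-s_HU(w_B(1,0))$ at $\mu_G=0$, and tends to $-\infty$ as $\mu_G\uparrow 1$ (Assumption~\ref{asswage}), so it has a unique interior zero $\mu_G^*$ iff $\Delta\Pi^G(1,0)>0$, i.e.\ exactly (\ref{prop3cond2}). Mirroring, the candidates with $\mu_R<\mu_G$ are $(0,1)$ under (\ref{prop3cond1}) and $(\mu_G^*,1)$ under (\ref{prop3cond2}); together with the symmetric interior equilibrium(s), this is the full list.

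It remains to decide stability. At a symmetric interior equilibrium $(\mu,\mu)$ both gaps vanish, so stability requires $\det(D\Delta\Pi)>0$; by symmetry $D\Delta\Pi$ has equal diagonal entries $a$ and equal off-diagonal entries $c$, whence $\det=(a+c)(a-c)$. I would show $a+c=\frac{d}{d\mu}\Delta\Pi^R(\mu,\mu)<0$: on the diagonal $w_A,w_B$ are symmetric so the wage-channel parts of $\partial_{\mu_R}$ and $\partial_{\mu_G}$ coincide, while $\partial_{\mu_R}\ln s_A^R>\partial_{\mu_G}\ln s_A^R$ and $\partial_{\mu_R}\ln s_B^R<\partial_{\mu_G}\ln s_B^R$ (the $\lambda/2$ terms); with Assumption~\ref{assemployment} this gives $\partial_{\mu_R}\ln\Pi_A^R+\partial_{\mu_G}\ln\Pi_A^R<2\,\partial_{\mu_R}\ln\Pi_A^R<0$ and $\partial_{\mu_R}\ln\Pi_B^R+\partial_{\mu_G}\ln\Pi_B^R>2\,\partial_{\mu_R}\ln\Pi_B^R>0$. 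Also $a-c=(\partial_{\mu_R}-\partial_{\mu_G})(\Pi_A^R-\Pi_B^R)$, whose wage terms cancel on the diagonal, leaving $\frac{\lambda}{2}\bigl(s'(\cdot)U(w_A)+s'(\cdot)U(w_B)\bigr)>0$. Hence $\det(D\Delta\Pi)<0$, so every symmetric interior equilibrium violates stability condition (ii) and is unstable (this also gives uniqueness of $\mu^*$, since $\Delta\Pi^R$ is then strictly monotone along the diagonal). At the remaining equilibria $(1,0)$ and $(1,\mu_G^*)$ (and their mirrors), $\Delta\Pi^R>0$ strictly, so the determinant condition and condition~(i) for $R$ are vacuous, and condition~(i) for $G$ (relevant only when $\Delta\Pi^G=0$) holds because $\partial_{\mu_G}\Delta\Pi^G<0$ by Assumption~\ref{assemployment}; these are therefore stable. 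Assembling: under (\ref{prop3cond1}) the stable equilibria are exactly $(1,0)$ and $(0,1)$ (complete segregation), and under (\ref{prop3cond2}) exactly $(1,\mu_G^*)$ and $(\mu_G^*,1)$ (partial segregation with $\mu_R=1$ or $\mu_G=1$).

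I expect the main obstacle to be the classification step: the crucial insight is that the cross-group comparison ``$\mu_R>\mu_G\Rightarrow\Delta\Pi^R>\Delta\Pi^G$'' combined with the parameter-free positivity ``$\Delta\Pi^R(\cdot,0)>0$'' collapses the problem --- it simultaneously rules out asymmetric interior equilibria and pins the more-$A$ group at $\mu_R=1$, with no need to analyse best-response curvature or run a two-cycle argument. A secondary delicate point is signing $\det(D\Delta\Pi)$ at the symmetric equilibrium, where the decisive observation is that the wage-channel contributions to the off-diagonal Jacobian entries cancel on the diagonal, leaving only the homophily term $\lambda/2$, which makes the determinant negative and hence the symmetric equilibrium a saddle.
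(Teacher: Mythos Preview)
Your proposal is correct and follows essentially the same approach as the paper. The paper packages your cross-group comparison ($\mu_R>\mu_G\Rightarrow\Delta\Pi^R>\Delta\Pi^G$) and the Jacobian sign argument at the symmetric point into a separate Lemma~1 (showing $c<a<0$, equivalent to your $(a+c)(a-c)<0$ factorization), and then invokes that lemma plus Assumption~\ref{assemployment} to verify the boundary equilibria and rule out alternatives; your organization folds these steps together and is somewhat more explicit about eliminating edge candidates of the form $(\mu_R,0)$ with $\mu_R<1$, but the key observations are identical.
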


We first note that a non-segregation equilibrium cannot exist, even in the
case of a tiny amount of homophily $\lambda$. The intuition is that homophily
in the social network among members of the same social group creates a
group-dependent network effect, that is, the utility of a specific education to an individual will increase with the number of individuals from the same social group that pick that same education. This network externality implies, for instance, that if slightly more Red workers choose $A$
than Greens do, then the value of an $A$-education is higher for the Reds than
for the Greens, while the value of a $B$-education is lower in the Reds'
group. Positive feedback then ensures that the initially small differences in
education choices between the two groups widen and widen, until at least one
group segregates completely into one type of education.

Second, if the utility ratio, and hence (since utility is strictly increasing in the wage) the wage ratio or, to make it most intuitive, the wage differential between the two jobs (for equal supply of A-educated and B-educated workers) $w_{A}(1,0)-w_{B}(1,0)$, is not "too large" vis-\`{a}-vis the social
network effect (condition \ref{prop3cond1}), complete segregation is the only
stable equilibrium outcome, given a positive inbreeding bias in the social
group. Thus one social group specializes in one occupation, and the other
group in the other occupation. On the other hand, the proposition makes clear
that complete segregation cannot be sustained if the wage differential is "too
large" vis-\`{a}-vis the social network effect (condition \ref{prop3cond2}).
Starting from complete segregation, a large wage differential gives incentives
to the group specialized in $B$-jobs to switch to $A$-jobs.

Interestingly, the "unsustainable" complete segregation equilibrium is then
replaced by a partial segregation equilibrium in which a group specializes in the
\textquotedblleft good\textquotedblright\ job $A$, while the other has
both $A$-educated and $B$-educated workers. Partial segregation in which one group, say the
Greens, fully specializes in the \textquotedblleft bad\textquotedblright\ job
$B$ is unsustainable, as it would lead to an oversupply of $B$-educated workers and
an even larger wage gap. This would provide Red $B$-educated workers with
strong incentives to switch en masse to the $A$-job.

\subsection{Inequality}

\label{secsebi5}

The discussion so far ignored eventual equilibrium differentials in wages and
unemployment between the two types of jobs. We now tackle that case. We
continue to assume that $w_{A}(1,0)\geq w_{B}(1,0)$ and, in light of the
results of Proposition \ref{propsegregation}, we focus without loss of
generality on the equilibrium in which $\mu_{R}=1$. Thus, the Reds specialize
in the \textquotedblleft good\textquotedblright\ job $A$, while the
\textquotedblleft bad\textquotedblright\ job $B$ is only performed by Green workers.

We first consider the case in which wage differentials are small enough so
that complete segregation is an equilibrium ($\mu_{R}=1$ and $\mu_{G}=0$). In
this case the implications are straightforward. Since both groups specialize
in equal amounts, the network effects are equally strong, and the employment
rates are equal. Given that employment rates are equal, effective labor
supply is also equal, and therefore the wage of the \textquotedblleft
good\textquotedblright\ job is weakly higher. We thus have the following result:

\begin{proposition}
\label{prop4b} Suppose Assumptions~\ref{asswage} and \ref{assemployment} hold.
Define $s_{H}\equiv s((p+\kappa+\lambda)/2)$ and $s_{L}\equiv s((p+\kappa)/2)$
and suppose that $1 \leq\frac{w_{A}(1,0)}{w_{B}(1,0)}\leq\frac{s_{H}}{s_{L}}$.
Then $(\mu_{R}, \mu_{G})=(1,0)$ is a stable equilibrium, in which
\[
w_{A} \geq w_{B},
\]
\[
s_{A}^{R}=s_{B}^{G}>s_{B}^{R}=s_{A}^{G},
\]
and
\begin{equation}
\Pi_{A}^{R} \geq\Pi_{B}^{G} \geq\Pi_{A}^{G} \geq\Pi_{B}^{R}.
\label{prop3comppayoff}%
\end{equation}

\end{proposition}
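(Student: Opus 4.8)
The plan is to exploit the full symmetry of the model at the equilibrium $(\mu_R,\mu_G)=(1,0)$. First I would compute the effective labour supplies at this point. By \eqref{eqsax}–\eqref{eqsbx}, at $(\mu_R,\mu_G)=(1,0)$ we have $\bar\mu = 1/2$, so $s_A^R = s((p+\kappa)/2 + \lambda/2) = s_H$, $s_B^R = s((p+\kappa)/2) = s_L$, $s_A^G = s((p+\kappa)/2) = s_L$, and $s_B^G = s((p+\kappa)/2 + \lambda/2) = s_H$. This immediately gives the chain $s_A^R = s_B^G = s_H > s_L = s_B^R = s_A^G$, using $s'>0$ and $\lambda>0$. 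Plugging these into \eqref{eqla}–\eqref{eqlb} with $\mu_R=1$, $\mu_G=0$ yields $L_A = s_H/2$ and $L_B = s_H/2$, so the effective labour supplies in the two occupations are \emph{equal}. Consequently $w_A = \partial F/\partial L_A(s_H/2, s_H/2)$ and $w_B = \partial F/\partial L_B(s_H/2, s_H/2)$, evaluated at the same point; I then need $w_A \geq w_B$ at equal arguments, which should follow from the hypothesis $w_A(1,0) \geq w_B(1,0)$ — here I would check that $w_A(1,0)$ and $w_B(1,0)$ are precisely these same marginal products (indeed at $(\mu_R,\mu_G)=(1,0)$ the labour supplies are $(s_H/2,s_H/2)$ regardless), so $w_A = w_A(1,0) \geq w_B(1,0) = w_B$ is immediate. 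Since $U$ is strictly increasing, $U(w_A) \geq U(w_B)$.

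Next I would assemble the payoff ordering \eqref{prop3comppayoff}. Writing $\Pi_A^R = s_H U(w_A)$, $\Pi_B^G = s_H U(w_B)$, $\Pi_A^G = s_L U(w_A)$, $\Pi_B^R = s_L U(w_B)$, the inequalities reduce to elementary comparisons. We get $\Pi_A^R = s_H U(w_A) \geq s_H U(w_B) = \Pi_B^G$ from $U(w_A)\geq U(w_B)$; then $\Pi_B^G = s_H U(w_B) \geq s_L U(w_A) = \Pi_A^G$ — this is the one nontrivial link, and here I would invoke the equilibrium condition: since $(\mu_R,\mu_G)=(1,0)$ is an equilibrium, \eqref{eqcondmu1} applied to group $R$ gives $\Delta\Pi^R \geq 0$ at $\mu_R=1$, i.e. $s_H U(w_A) \geq s_L U(w_B)$, and \eqref{eqcondmu0} applied to group $G$ gives $\Delta\Pi^G \leq 0$ at $\mu_G=0$, i.e. $s_L U(w_A) \leq s_H U(w_B)$ — the latter is exactly $\Pi_A^G \leq \Pi_B^G$. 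Finally $\Pi_A^G = s_L U(w_A) \geq s_L U(w_B) = \Pi_B^R$ again from $U(w_A)\geq U(w_B)$ and $s_L > 0$. Chaining these gives $\Pi_A^R \geq \Pi_B^G \geq \Pi_A^G \geq \Pi_B^R$.

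I expect the only real subtlety to be the middle inequality $\Pi_B^G \geq \Pi_A^G$: the naive bound $s_H U(w_B)$ versus $s_L U(w_A)$ is not settled by monotonicity alone, since $s_H > s_L$ pushes one way while $U(w_A) \geq U(w_B)$ pushes the other. The resolution is that this is precisely the content of $G$ being at a corner with $\mu_G = 0$: an equilibrium requires that Green workers do not want to deviate into the good job, which is $\Delta\Pi^G(1,0) = s_A^G U(w_A) - s_B^G U(w_B) = s_L U(w_A) - s_H U(w_B) \leq 0$. So the hypothesis that $(1,0)$ is a stable equilibrium is doing essential work here, not merely the parameter restriction $w_A(1,0)/w_B(1,0) \leq s_H/s_L$ (though that restriction is what guarantees such an equilibrium \emph{exists}, via Proposition~\ref{propsegregation}). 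Everything else is a direct substitution into \eqref{eqsax}–\eqref{eqlb} and \eqref{pi_r_a}–\eqref{pi_r_b} together with strict monotonicity of $U$ and $s$; no compactness, fixed-point, or stability-of-dynamics argument is needed for this particular proposition beyond invoking that the stated point is an equilibrium.
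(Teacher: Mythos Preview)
Your proof is correct and follows the same short route as the paper: compute the four employment probabilities from \eqref{eqsax}--\eqref{eqsbx} at $(1,0)$, observe $L_A=L_B=s_H/2$ so $w_A\ge w_B$ is just the standing hypothesis, and then chain the payoffs.

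One small remark on your commentary. You flag the middle inequality $\Pi_B^G\ge \Pi_A^G$ as the place where the equilibrium condition \eqref{eqcondmu0} does ``essential work'', as opposed to the parameter restriction $w_A(1,0)/w_B(1,0)\le s_H/s_L$. In fact that parameter restriction alone already suffices: since $U$ is strictly concave with $U(0)=0$, one has $U(w_A)/U(w_B)\le w_A/w_B\le s_H/s_L$ whenever $w_A\ge w_B>0$, which rearranges to $s_L U(w_A)\le s_H U(w_B)$. So either hypothesis yields the middle inequality, and your derivation via \eqref{eqcondmu0} is a perfectly valid alternative; the paper's proof simply asserts the chain without saying which route it has in mind.
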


Next, we turn to the analysis of the more interesting case in which wage
differentials for equal labour supply are large. In that case there is a partial segregation equilibrium in which
$(\mu_{R},\mu_{G})=(1,\mu^{\ast})$ where $\mu^{\ast}\in(0,1)$. First note that
according to (\ref{eqcondmu}) this implies the following condition:
\[
\Pi_{A}^{G}(1,\mu^{\ast})=\Pi_{B}^{G}(1,\mu^{\ast}),
\]
or equivalently
\[
s_{A}^{G}(1,\mu^{\ast})U(w_{A}(1,\mu^{\ast}))=s_{B}^{G}(1,\mu^{\ast}%
)U(w_{B}(1,\mu^{\ast})).
\]
Thus, whereas workers in group $R$ prefer the $A$-job, workers in group
$G$ make an individual trade-off: lower wages should be exactly compensated by
higher employment probabilities and vice versa.

We are particularly interested in whether this individual trade-off between
unemployment and wages translates into a similar trade-off at the
group level, in that an equilibrium inter-group wage gap could also be "compensated" by a reversed
employment gap. We have the following proposition.

\begin{proposition}
\label{prop6} Suppose Assumptions~\ref{asswage}~and~\ref{assemployment} hold.
Define $s_{H}\equiv s((p+\kappa+\lambda)/2)$ and $s_{L}\equiv s((p+\kappa)/2)$
and suppose that $\frac{w_{A}(1,0)}{w_{B}(1,0)}>\frac{s_{H}}{s_{L}}$. Define
$\hat{\mu}\in(0,1)$, such that
\begin{equation}
w_{A}(1,\hat{\mu})=w_{B}(1,\hat{\mu}), \label{marginalworker}%
\end{equation}
and let $(\mu_{R},\mu_{G})=(1,\mu^{\ast})$ be a stable equilibrium. In that
equilibrium
\begin{equation}
\Pi_{A}^{R}>\Pi_{B}^{G}=\Pi_{A}^{G}>\Pi_{B}^{R}. \label{prop5comppayoff}%
\end{equation}
Moreover,

\begin{enumerate}
\item[(i)] if $\hat{\mu} < \frac{\lambda}{2(p+\kappa+\lambda)}, $ then
\[
s_{A}^{R}>s_{B}^{G}>s_{A}^{G}>s_{B}^{R},
\]
and
\[
w_{A}(1,\mu^{*}) > w_{B}(1,\mu^{*});
\]

\item[(ii)] if $\hat{\mu} > \frac{\lambda}{2(p+\kappa+\lambda)}, $ then
\[
s_{A}^{R}>s_{A}^{G}>s_{B}^{G}>s_{B}^{R},
\]
and
\[
w_{B}(1,\mu^{*}) > w_{A}(1,\mu^{*}).
\]

\end{enumerate}
\end{proposition}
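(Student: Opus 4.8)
The plan is to start from the equilibrium indifference condition of the Green group, $\Pi_A^G(1,\mu^*) = \Pi_B^G(1,\mu^*)$, which by construction gives the middle equality $\Pi_B^G = \Pi_A^G$ in \eqref{prop5comppayoff}. The outer inequalities $\Pi_A^R > \Pi_A^G$ and $\Pi_B^G > \Pi_B^R$ then follow from the ordering of employment rates: since $\mu_R = 1 > \mu^* = \mu_G$ and $\lambda > 0$, the remark after \eqref{eqsbx} gives $s_A^R > s_A^G$ and $s_B^G > s_B^R$, and multiplying each by the common wage utility $U(w_A(1,\mu^*))$ respectively $U(w_B(1,\mu^*))$ yields $\Pi_A^R > \Pi_A^G$ and $\Pi_B^G > \Pi_B^R$. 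One must also check strictness of $\mu^* < 1$: if $\mu^* = 1$ then $(\mu_R,\mu_G) = (1,1)$, which Assumption~\ref{asswage} rules out as an equilibrium, so indeed $\mu^* \in (0,1)$ and the employment-rate inequalities are strict.

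The heart of the argument is the case split on the position of $\hat\mu$, the "equal-wage" fraction defined by \eqref{marginalworker}. The first step here is to locate $\mu^*$ relative to $\hat\mu$. At $\mu_G = \hat\mu$ the two wages coincide, so the indifference condition $s_A^G U(w_A) = s_B^G U(w_B)$ would force $s_A^G(1,\hat\mu) = s_B^G(1,\hat\mu)$; but from \eqref{eqsax}–\eqref{eqsbx} with $\mu_R = 1$ one computes $s_A^G(1,\mu_G) = s((p+\kappa)\tfrac{1+\mu_G}{2} + \lambda\tfrac{\mu_G}{2})$ and $s_B^G(1,\mu_G) = s((p+\kappa)\tfrac{1-\mu_G}{2} + \lambda\tfrac{1-\mu_G}{2})$; equating the arguments and solving gives a unique crossing point $\tilde\mu = \frac{p+\kappa+\lambda}{2(p+\kappa)+\lambda} \cdot(\text{something})$ — more precisely the crossing occurs where $(p+\kappa)(1+\mu_G) + \lambda\mu_G = (p+\kappa+\lambda)(1-\mu_G)$, i.e. at $\mu_G = \frac{\lambda}{2(p+\kappa+\lambda)}$. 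Wait — I should rather note directly that $s_A^G \gtrless s_B^G$ according as $\mu_G \gtrless \frac{\lambda}{2(p+\kappa+\lambda)}$, since $s$ is increasing and the arguments are affine in $\mu_G$. Using this together with the monotonicity of $w_A - w_B$ in $\mu_G$ (decreasing, since $w_A$ falls and $w_B$ rises in both arguments) and the fact that the product $s_A^G U(w_A) = s_B^G U(w_B)$ holds at $\mu^*$, one can pin down on which side of both $\hat\mu$ and $\frac{\lambda}{2(p+\kappa+\lambda)}$ the equilibrium $\mu^*$ must lie: in case (i), $\hat\mu$ small forces $\mu^* < \hat\mu$ hence $w_A(1,\mu^*) > w_B(1,\mu^*)$, and then the indifference equation forces $s_A^G < s_B^G$; in case (ii) the reverse.

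Once the wage inequality and the sign of $s_A^G - s_B^G$ are established in each case, the full employment-rate chain follows mechanically: we already have $s_A^R > s_A^G$ and $s_B^G > s_B^R$, and inserting the middle comparison ($s_B^G > s_A^G$ in case (i), $s_A^G > s_B^G$ in case (ii)) gives the two four-term chains. The main obstacle I anticipate is the bookkeeping in the previous paragraph: correctly showing that the equilibrium $\mu^*$ lies on the \emph{correct} side of the threshold $\frac{\lambda}{2(p+\kappa+\lambda)}$ — this requires combining the monotonicity of $w_A - w_B$ in $\mu_G$, the monotonicity of $s_A^G$ and $s_B^G$ in $\mu_G$ (note $s_A^G$ is increasing while $s_B^G$ is decreasing in $\mu_G$), and Assumption~\ref{assemployment} to guarantee that the payoff difference $\Delta\Pi^G$ crosses zero exactly once, so that the stable $\mu^*$ is uniquely determined and the case split is exhaustive and exclusive. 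The boundary case $\hat\mu = \frac{\lambda}{2(p+\kappa+\lambda)}$ (where $\mu^* = \hat\mu$ and all wages and the two Green employment rates coincide) is the knife-edge separating (i) and (ii) and can be mentioned as yielding $w_A = w_B$ with $s_A^R = s_B^R$'s... actually $s_A^R > s_A^G = s_B^G > s_B^R$, consistent with both limits.
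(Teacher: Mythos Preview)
Your approach is essentially the same as the paper's: evaluate $\Delta\Pi^{G}(1,\cdot)$ at $\hat\mu$ using the sign of $s_{A}^{G}-s_{B}^{G}$ there (determined by the threshold $\lambda/[2(p+\kappa+\lambda)]$), invoke Assumption~\ref{assemployment} for strict monotonicity of $\Delta\Pi^{G}(1,\mu_G)$ to locate $\mu^{*}$ on the correct side of $\hat\mu$, and then read off the wage ordering from the monotonicity of $w_A-w_B$. The payoff chain \eqref{prop5comppayoff} is obtained exactly as you describe.

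There is one genuine bookkeeping gap. In case~(i) you claim that ``inserting the middle comparison $s_{B}^{G}>s_{A}^{G}$'' into $s_{A}^{R}>s_{A}^{G}$ and $s_{B}^{G}>s_{B}^{R}$ gives the four-term chain $s_{A}^{R}>s_{B}^{G}>s_{A}^{G}>s_{B}^{R}$. It does not: from those three inequalities alone you cannot conclude $s_{A}^{R}>s_{B}^{G}$ or $s_{A}^{G}>s_{B}^{R}$. (In case~(ii) the chain \emph{does} follow, because the middle link $s_{A}^{G}>s_{B}^{G}$ sits between the two outer links.) You need to check these two extra comparisons directly. Using \eqref{eqsax}--\eqref{eqsbx} with $\mu_{R}=1$, the argument of $s_{A}^{R}$ minus the argument of $s_{B}^{G}$ equals $\bigl[2(p+\kappa)+\lambda\bigr]\mu_{G}/2>0$, and the argument of $s_{A}^{G}$ minus the argument of $s_{B}^{R}$ equals $\bigl[2(p+\kappa)+\lambda\bigr]\mu_{G}/2>0$ as well; both are strictly positive precisely because $\mu^{*}>0$ in the partial-segregation equilibrium. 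So the fix is immediate, but the claim as written is incorrect and should be amended.
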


The main implication of this proposition is that an equilibrium inter-group wage gap
\emph{might not always be} compensated by a reversed employment gap. It is thus
possible that the group specializing in the good job, here the Reds, both
earns a higher wage \emph{and} has higher employment probabilities than the
Greens. Indeed, this is what happens with a large initial differential
in attractiveness between good and bad jobs and, when group homophily bias $\lambda$
is large relative to $p$ and $\kappa$ (in fact $p+\kappa$), hence when we are in situation (i) described above. 
A "trade-off" between the equilibrium inter-group wage and employment gaps (where, in fact, a reversed equilibrium inter-group wage gap compensates now for the employment gap) only occurs 
for $\lambda$ sufficiently small relative to $p+\kappa$, hence in case (ii) from above.

These results can be intuitively understood by making the following observations. First, the workers in the
'specializing' group $R$ have a higher employment probability than \emph{all}
workers in group $G$. This is always the case, regardless of whether the
individual in $G$ is an $A$ or a $B$ worker, and whether $s_{B}^{G}>s_{A}^{G}$
or not. As all members of group $R$ choose the same occupation, the Reds
remain a strong homogenous social group. Network formation with homophily then
implies that they are able to create a lot of ties, and hence, that they
benefit most from their social network. On the other hand, the Greens are
dispersed between two occupations. This weakens their social network, decreasing
their chances on the labor market, for both $A$ and $B$-educated workers belonging to the
Greens group. Second, whether the equilibrium wage differential between the workers in the two groups is
positive or negative depends on the relative size of $\lambda$ relative to
$p+\kappa$, in the term $\frac{\lambda}{2(p+\kappa+\lambda)}$ from the
inequality conditions in Proposition \ref{prop6}. This can be roughly assessed
in light of the empirical evidence on homophily discussed earlier in this
paper. First, as seen from the stylized facts from Section \ref{secsebi2_hom}, the
assortative matching by education, $\kappa$, is typically found to be weaker
relative to racial, ethnical or gender homophily. The second interesting
situation is a scenario where the probability of making contacts in general,
$p$, were already extremely high relative to the intra-group homophily bias.
However, given the surprisingly large size of intra-group inbreeding biases in
personal networks of contacts found empirically, this is also unlikely. Hence,
the likelihood is very high that in practice $\lambda$ would dominate the
other parameters in the cutoff term $\frac{\lambda}{2(p+\kappa+\lambda)}$ (and thus in practice we would in the world corresponding to case (i) from above).

Let us now summarize all the implications of the latest Proposition. The fully
specializing group is always better off in terms of unemployment rate and
payoff, independent of either relative or absolute sizes of $\lambda$, $p$ and
$\kappa$ (as long as $\lambda>0$), as shown in Proposition \ref{prop6}.
Furthermore, given the observed patterns of social networks discussed in
Section~\ref{secsebi2_hom}, the condition of $\lambda$ dominant relative to $p$
and $\kappa$ is likely to be met. This ensures that the group fully
specializing in the good job always has a higher wage in the equilibrium than
the group mixing over the two jobs, as proven in Proposition \ref{prop6}. Note
that this partial segregation equilibrium is in remarkable agreement with
observed occupational, wage and unemployment disparities in the labor market
between genders or races, as for instance overviewed in Subsection \ref{secsebi1_occeg} for gender disparities. This suggests that our
model offers a plausible explanation for major empirical patterns of
labor market inequality.

\section{Social welfare}

\label{sec_welfare}

\subsection{First best social optimum}

In the previous section we obtained that individual incentives lead to
occupational segregation and to wage and unemployment inequality. Could this imply
that a policy targeting integration may reduce inequality, and in fact
may just be socially beneficial -- argument often used
by proponents of positive discrimination? We thus seek to also analyze the
implications of our model from a social planner's point of view.

Consider a utilitarian social welfare function:
\begin{equation}
W(\mu_{R},\mu_{G})=\mu_{R}\Pi_{A}^{R}/2+(1-\mu_{R})\Pi_{B}^{R}/2+\mu_{G}%
\Pi_{A}^{G}/2+(1-\mu_{G})\Pi_{B}^{G}/2, \label{eqwelfare}%
\end{equation}
where $\Pi_{A}^{X}\equiv\Pi_{A}^{X}(\mu_{R},\mu_{G})$ and $\Pi_{B}^{X}%
\equiv\Pi_{B}^{X}(\mu_{R},\mu_{G})$ are given by equations (\ref{pi_r_a}) and
(\ref{pi_r_b}). Since unemployed workers obtain zero utility, we can also
write the welfare function as
\begin{equation}
W(\mu_{R},\mu_{G})=L_{A}U\left(  \frac{\partial F}{\partial L_{A}}(L_{A}%
,L_{B})\right)  +L_{B}U\left(  \frac{\partial F}{\partial L_{B}}(L_{A}%
,L_{B})\right)  , \label{eqwelfare2}%
\end{equation}
where $L_{A}\equiv L_{A}(\mu_{R},\mu_{G})$ and $L_{B}\equiv L_{B}(\mu_{R}%
,\mu_{G})$ were introduced by (\ref{eqla}) and (\ref{eqlb}). The formulation
in (\ref{eqwelfare2}) is useful, because it shows that what matters for social
welfare is the effect of a policy on the society's effective labor supply.

We consider a first-best social optimum, that is, the social planner is able
to fully manage $\mu_{R} \in[0,1]$ and $\mu_{G} \in[0,1]$ and therefore the
social optimum $\mu^{S}=(\mu_{R}^{S}, \mu_{G}^{S})$ is defined as
\[
\mu^{S} = \mbox{argmax}_{\mu_{R} \in[0,1], \mu_{G} \in[0,1]} W(\mu_{R},
\mu_{G}).
\]
We obtain the following result:

\begin{proposition}
\label{propwelfare} If for all $x\in\lbrack0,(p+\kappa+\lambda)/2]:$
\begin{equation}
s^{\prime\prime}(x)>-\frac{4}{\lambda}\ s^{\prime}(x)
\end{equation}
then any social optima involves complete or partial segregation.
\end{proposition}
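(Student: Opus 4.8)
The plan is: suppose toward a contradiction that $W$ attains its maximum over $[0,1]^2$ at an interior point $\mu^{\ast}=(\mu_R^{\ast},\mu_G^{\ast})$ with $\mu_R^{\ast},\mu_G^{\ast}\in(0,1)$; then $\nabla W(\mu^{\ast})=0$, and I will derive a contradiction in two steps. Throughout I use the representation (\ref{eqwelfare2}): writing $\Phi(L_A,L_B)\equiv L_A\,U(\partial F/\partial L_A)+L_B\,U(\partial F/\partial L_B)$, we have $W(\mu_R,\mu_G)=\Phi\bigl(L_A(\mu_R,\mu_G),L_B(\mu_R,\mu_G)\bigr)$, so $W$ enters the problem only through the effective labour supplies $L_A,L_B$.

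The first step is to slice $[0,1]^2$ by the lines $\{\mu_R+\mu_G=\sigma\}$, on which $\bar\mu=\sigma/2$ is fixed, and to move $\mu_R$ up and $\mu_G$ down at unit speed along a slice. By (\ref{eqsax})--(\ref{eqsbx}), $L_A=\tfrac12\bigl[h(\mu_R)+h(\mu_G)\bigr]$ with $h(u)\equiv u\,s\bigl((p+\kappa)\tfrac{\sigma}{2}+\tfrac{\lambda}{2}u\bigr)$, and $L_B$ has an analogous expression in $1-\mu_R,\,1-\mu_G$ (whose sum $2-\sigma$ is likewise fixed on the slice). Differentiating, $h''(u)=\lambda s'(\cdot)+\tfrac{\lambda^2}{4}u\,s''(\cdot)=\tfrac{\lambda^2}{4}u\bigl(s''(\cdot)+\tfrac{4}{\lambda u}s'(\cdot)\bigr)$; since $0<u\le1$ and $s'>0$, the hypothesis $s''>-\tfrac{4}{\lambda}s'$ gives $s''(\cdot)+\tfrac{4}{\lambda u}s'(\cdot)\ge s''(\cdot)+\tfrac{4}{\lambda}s'(\cdot)>0$, so $h''>0$. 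Hence $L_A$ and $L_B$ are \emph{strictly convex} along every slice, and, since $h'$ is strictly increasing, each is minimized on the slice exactly at its midpoint $\mu_R=\mu_G$. In particular: at an interior point with $\mu_R^{\ast}>\mu_G^{\ast}$, moving in the direction of increasing $\mu_R$ along the slice has $dL_A/dt>0$ and $dL_B/dt>0$; at an interior point on the diagonal $\mu_R^{\ast}=\mu_G^{\ast}$, that same direction has $dL_A/dt=dL_B/dt=0$ but $d^2L_A/dt^2>0$ and $d^2L_B/dt^2>0$ (and symmetrically if $\mu_R^{\ast}<\mu_G^{\ast}$). This is the only place the curvature hypothesis on $s$ is used — though one should check that the arguments of $s$ that actually arise over $[0,1]^2$ stay inside the range on which that hypothesis is imposed.

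The second step fixes the signs of the partials of $\Phi$ at $\mu^{\ast}$. Using constant returns to scale — Euler's identity $L_A F_{AA}+L_B F_{AB}=0$ for the homogeneous-degree-zero function $\partial F/\partial L_A$ — one simplifies $\partial\Phi/\partial L_A=U(w_A)+L_A F_{AA}\bigl(U'(w_A)-U'(w_B)\bigr)$, where $w_A=\partial F/\partial L_A>0$, $w_B=\partial F/\partial L_B>0$, $F_{AA}=\partial^2F/\partial L_A^2<0$, and symmetrically for $\partial\Phi/\partial L_B$. Since $U>0$ on $(0,\infty)$ and $U'$ is strictly decreasing, $\partial\Phi/\partial L_A\le0$ forces $w_A<w_B$, and $\partial\Phi/\partial L_B\le0$ forces $w_B<w_A$; hence the two partials of $\Phi$ are never simultaneously nonpositive, and in particular $\nabla\Phi\neq0$. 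Now $\partial W/\partial\mu_R=(\partial\Phi/\partial L_A)(\partial L_A/\partial\mu_R)+(\partial\Phi/\partial L_B)(\partial L_B/\partial\mu_R)=0$ at $\mu^{\ast}$; since $\partial L_A/\partial\mu_R>0>\partial L_B/\partial\mu_R$ (noted in Section~\ref{secsebi3}) and $\nabla\Phi\neq0$, the two partials of $\Phi$ must be nonzero and of the same sign, hence — by the preceding sentence — both strictly positive.

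The contradiction follows by evaluating $W$ along the slice through $\mu^{\ast}$ in the away-from-diagonal direction of the first step. If $\mu_R^{\ast}\neq\mu_G^{\ast}$, the derivative $dW/dt$ at $\mu^{\ast}$ equals $(\partial\Phi/\partial L_A)(dL_A/dt)+(\partial\Phi/\partial L_B)(dL_B/dt)>0$, contradicting the maximality of $\mu^{\ast}$. If $\mu_R^{\ast}=\mu_G^{\ast}$, then $dL_A/dt=dL_B/dt=0$ at $\mu^{\ast}$, so the cross term in $d^2W/dt^2$ vanishes and $d^2W/dt^2$ at $\mu^{\ast}$ equals $(\partial\Phi/\partial L_A)(d^2L_A/dt^2)+(\partial\Phi/\partial L_B)(d^2L_B/dt^2)>0$, so $\mu^{\ast}$ is a strict local minimum of $W$ along that line — again contradicting maximality. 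Hence $W$ has no interior maximizer, so every maximizer has $\mu_R\in\{0,1\}$ or $\mu_G\in\{0,1\}$, i.e.\ exhibits complete or partial segregation. The step I expect to carry the real weight is the second one: extracting, via the CRS/Euler simplification together with the concavity of $U$ and $F$, both that $\nabla\Phi$ never vanishes and that at a critical point its two components are forced to be strictly positive; by comparison, the convexity computation in the first step is essentially routine once $L_A,L_B$ are written through $h$, and the diagonal (second-order) subcase of the last step drops out once the signs in step two are in hand.
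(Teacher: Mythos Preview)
Your proof is correct and follows a genuinely different route from the paper's. The paper restricts to the level curve $\{L_A/L_B=c\}$ through the putative interior maximizer; constant returns to scale make both wages constant on that curve, so $W$ becomes linear and increasing in $L_A$, and the problem reduces to maximizing $L_A$ subject to the nonlinear constraint $L_A=cL_B$. A Lagrangian argument, fed by a lemma comparing the partials and cross-partials of $L_A,L_B$, then forces the critical point onto the diagonal (first-order conditions) and shows the second-order condition fails there.

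You instead slice along the straight lines $\mu_R+\mu_G=\sigma$, which fixes $\bar\mu$ and makes the strict convexity of $L_A,L_B$ a one-variable computation in $h$. The cost is that wages are not constant on these slices, so you cannot simply reduce to maximizing $L_A$; you compensate by proving directly---via Euler's identity for the degree-zero marginal products together with the strict concavity of $U$---that $\partial\Phi/\partial L_A$ and $\partial\Phi/\partial L_B$ are both strictly positive at any interior critical point of $W$. This positivity step has no counterpart in the paper's argument and is the new ingredient in your approach. Your route buys simpler geometry (linear slices, no Lagrangian, no bordered Hessian) and yields a standalone monotonicity fact about $\Phi$; the paper's route buys not having to compute $\nabla\Phi$ at all. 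The caveat you flag about whether the arguments of $s$ stay inside $[0,(p+\kappa+\lambda)/2]$ is a genuine imprecision in the proposition as stated, but it afflicts the paper's own proof (Lemma~\ref{lemma2}(i)) in exactly the same way.
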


Thus a segregation policy is socially preferred, as long as $s(x)$, the
employment probability of having $x$ friends with the same education, is "not
too concave". This proposition can be intuitively understood as follows.
Suppose that there is no segregation, and $0<\mu_{G}<\mu_{R}<1$. In that case
the Reds obtain a higher employment probability in an $A$-occupation,
$s_{A}^{R}>s_{B}^{R}$, whereas the Greens have a higher employment rate as
$B$-educated workers, $s_{B}^{G}>s_{A}^{G}$. Now consider the effect on segregation,
wages and employment when a social planner can force a—sufficiently small—measure of Red individuals initially
choosing a $B$-occupation and respectively, the same measure of Green individuals initially
choosing an $A$-occupation, into \emph{switching their occupation choice}, such that
$\mu_{R}$ slightly increases, whereas $\mu_{G}$ slightly decreases with the same amount.
The result of this event is, first, that segregation increases: the gap
between $\mu_{R}$ and $\mu_{G}$ becomes larger. Second, the total fraction of
individuals that choose occupation $A$, $(\mu_{R}+\mu_{G})/2$, does not change. So
the total ratio of $A$-educated to $B$-educated workers does not change, and
therefore the ratio of equilibrium wages is not much affected either. Hence, the effect on
wage inequality is only marginal. Third, by switching occupations, those Red
workers can now benefit from a denser network, and have an employment
probability $s_{A}^{R}$ instead of $s_{B}^{R}$. The same is true for the Green
workers switching from $A$ to $B$. Thus, the combined payoff of those switching workers
increases, as they are all more likely to become employed.  
We also need to consider the externality on the employment rates of the workers not involved
in the occupation switch. In particular, the switch of occupations increases
the network effects of the other Red $A$-educated and Green $B$-educated workers,
whereas it decreases the network effects of Red $B$-educated and Green
$A$-educated workers. The restriction on the concavity of $s(x)$ ensures that the
switch of occupations puts on average a positive externality on the employment
probabilities of other workers. 
 We conclude that the occupational switch of the two equal measures of Red and Green workers hardly affects wage inequality, while it increases the labor
supply of both $A$ and $B$. Therefore, social welfare increases. This is analogous 
to the so-called "participation externality" rationale from equilibrium models with positive spillovers and 
strategic complementarities, where an agent's decision to participate (i.e., produce) in a market
depends on the number of active agents in that market, see, e.g., the "trading externalities"
search equilibrium by \citet{Diamond 1982} or the general characterization of such models by \citet{Cooper and John 1988}.

The general message of this result is that integration policies might also
have -- unintended -- detrimental effects.\footnote{Integration might obviously be desirable for many reasons not captured by our parsimonious model. For instance, both direct and indirect personal utility may benefit from exposure to diversity; or, particularly relevant in our context of labor markets with homophilous social networks and job referrals, more integration may be desirable for its benefits on intergenerational mobility, see, e.g., the discussions and references in \citet{Bolteetal} and \citet{Jackson 2022}.} Under our model's assumptions, integration might weaken the employment chances of individuals,
because network effects are weaker in mixed networks. In the case of
complete segregation, individuals are surrounded by similar individuals during
their education. Thus, it is easier for them to make many friends they
can rely on when searching on the job market. Consequently, employment
probabilities are high. On the other hand, if educations are mixed, then
individuals have more difficulties in creating useful job contacts, and
therefore their employment probabilities are lower. It is worth stressing
that the result that integration weakens network effects and decreases labor
market opportunities has empirical support in related literature on
segregation. For example, \citet{Currarini et al} find clear evidence that
larger (racial) minorities create more friendships, and \citet{Marsden 1987}
finds a similar pattern in his network of advice. Therefore, it is more
beneficial for a worker to choose an education in which she is only surrounded
by similar others, instead of an education in which racial groups are mixed,
let alone one in which she is a small minority. In a different but related
context, \citet{alesina1, alesina2} find that participation in social
activities is lower in racially mixed communities and so is the level of
trust. These and our results suggest that possible negative impacts of
integration on social network effects should also be taken into account -- and mitigated where integration is explicitly desired.

Our outcome on the first-best social optimum hinges for a large part on the
fact that the social planner is able to increase employment by directly increasing
segregation through forced occupational switches of workers, while hardly affecting wage inequality. In reality though, a
social planner may not have this amount of control. Perhaps more feasible
would be a policy in which the social planner enforces and stabilizes
integration, but where the exact allocation of workers to occupations is
still determined by individual incentives, thus envisaging a potential trade-off of segregation between
network benefits and inequality. This suggests a second-best analysis of social
welfare, in which -- as more detailed in the next subsection of the paper -- the social planner is not able to fully control
the measure of workers choosing one (or the other) education, $\mu_{R}$ and $\mu_{G}$, but could nevertheless \emph{stabilize a symmetric
equilibrium}, such that the same measure of workers in both social groups chooses a particular education, $\mu_{R}=\mu_{G}=\mu^{S}$. Such an analysis is however unfeasible without
further parameter specifications, hence we will perform that analysis
subsequent to calibrating the model for suitable parameters and functional forms.

\subsection{Second best social optimum}

\subsubsection{Numerical simulation}

\label{secsim}

We calibrate the parameters, in order to perform a small numerical simulation
of our model. The purpose of this simulation is to get a better feeling on the
mechanisms of the model, the restrictiveness of our assumptions, and the
magnitude of the wage gap that can be generated. This straightforward
simulation also allows us to get some key insights about a second-best welfare
policy. A detailed analysis would require an extension of the model and is
beyond the scope of this paper.

We first specify functional forms for $s(x)$, the employment probability as
function of the number of friends with the same education, $F(L_{A},L_{B})$,
the production function and thus the derived wage functions, and $U(x)$, the
utility function. Regarding the employment probability, we consider a function
that follows from a dynamic labor process, in which employed individuals
become unemployed at rate 1, and in which unemployed individuals become
employed at rate $c_{0}+c_{1}x$, where $c_{0}$ is the rate at which unemployed
workers directly obtain information on job vacancies, and $c_{1}$ measures the
strength of having friends.\footnote{All parameters considered here are defined on appropriate domains such that the 'composite parameters' that we actually calibrate are well defined in the ranges earlier stated in the paper. For instance, since $s_{0}=c_{0}/(1+c_{0})$ as introduced later, the $c_{0}$ introduced here is considered to be defined on the appropriate domain such that $s_{0}=s(0)$, with $0<s(0)<1$.} This leads to the following employment function:
\[
s(x)=\frac{c_{0}+c_{1}x}{1+c_{0}+c_{1}x}.
\]
Since we have defined $s_{0}=s(0)$ as the employment probability when only
direct search is used, it follows that $s_{0}=c_{0}/(1+c_{0})$ 

For the production function we assume the commonly used Cobb-Douglas function
with constant returns to scale,
\[
F(L_{A},L_{B})=\theta L_{A}^{\alpha}L_{B}^{1-\alpha}.
\]
For the utility function we consider a function with constant absolute risk
aversion, where $\rho$ is the coefficient of absolute risk aversion. That is
\[
U(x)=1-e^{-\rho x}.
\]

We calibrate the parameters $s_{0}, c_{1}(p+\kappa), c_{1} \lambda, p$ and
$\theta$, leaving $\alpha$ as a free parameter. First, we calibrate $s_{0}$,
$c_{1}(p+\kappa)$, and $c_{1}\lambda$ from three equations that are motivated
by the empirical evidence given in Section \ref{secsebi2}. This parameterization is
sufficient to perform the simulation, and it is thus not necessary to
separately specify $c_{1}$, $p$, $\kappa$ and $\lambda$. The first equation is
obtained by imposing the restriction that about 50 \% of the workers find their
job through friends, as suggested in Section \ref{secsebi2_jcn}. This restriction implies that
the direct job arrival rate $c_{0}$ should equal the indirect job arrival rate
through friends $c_{1}x$. The indirect job arrival rate differs, depending on
the choices of the individuals, but if we focus on the case complete
segregation, in which $\mu_{R}=1$ and $\mu_{G}=0$, then we can impose the
following restriction:
\[
c_{0}=c_{1}(p+\kappa+\lambda)/2.
\]

Next, we calibrate the amount of inbreeding homophily in the social group.
This amount typically differs depending on the group defining characteristic.
For example, analyzing data on Facebook participants at Texas A\&M,
\citet{MayerPuller} find that two students living in the same dorm are 13 times more likely to be friends than two random students, two black students 17 times more
likely, but two Asian students 5 times more likely, and two Hispanic students
twice as likely to be friends. In light of this evidence, we chose to keep the
amount of inbreeding homophily in the simulation modest, imposing
$\lambda=3(p+\kappa)$.

We next impose that the employment rate is 95 \% in case of complete segregation. Remark that this benchmark employment rate value is chosen for illustration purposes only, and that all outcomes are qualitatively robust for a range of values (we have tried 90 \% to 99 \%, as realistic in context). Below, at the end of this subsection, we make further comments on the sensitivity of this and all our other relevant parameters.
Given the above, we solve
\[
\frac{2c_{0}}{1+2c_{0}}=0.95,
\]
and this implies that
\[
s_{0}=\frac{c_{0}}{1+c_{0}}=\frac{19}{21}\approx.9048.
\]
and further that $c_{1}(p+\kappa)=4.75$ and $c_{1}\lambda=14.25$.

Let us consider now the productivity parameter $\theta$ and the coefficient of
absolute risk aversion $\rho$. The coefficient of absolute risk aversion has
been estimated between $6.6\times10^{-5}$ and $3.1\times10^{-4}$
(\citet{Gertner}, \citet{Metrick 1995}, \citet{coheneinav}). We set the risk
aversion at $1.0\times10^{-4}$, which means a coefficient of relative risk
aversion of 4 at a wealth level of \$ 40,000, or indifference at participating
in a lottery of getting \$ 100.00 or losing \$ 99.01 with equal probability.

The productivity parameter, $\theta$, is chosen such that average income equal
\$ 40,000 in the case of complete segregation, $(\mu_{R},\mu_{G})=(1,0)$, and
$\alpha=.5$. Since in that situation $w_{A}(1,0)=w_{B}(1,0)=\theta/2$, we have
$\theta=80,000$.

\begin{table}[ptb]
\caption{Chosen parameter values in the simulation and sensitivity w.r.t. $\hat{\alpha}$ and the maximum wage gap.}%
\label{tab:parameters}
\begin{center}%
\begin{tabular}
[c]{|l|ccc|}\hline\hline
Parameter & Value & Elasticity of $\hat{\alpha}$ & Elasticity of wage gap\\
&  & $\hat{\alpha}=.5904$ & $G(1,0)=.306$\\\hline
$s_{0}$ & .9048 & -1.71 & -9.47\\
$c_{1}(p+\kappa)$ & 4.75 & -.04 & -.23\\
$c_{1}\lambda$ & 14.25 & .08 & .46\\
$\rho$ & $1.0\times10^{-4}$ & .38 & 2.09\\
$\theta$ & 80,000 & .38 & 2.09\\\hline\hline
\end{tabular}
\end{center}
\end{table}

We now look at the dependence of payoffs, wages and employment on $\alpha$
with $s_{0}$, $c_{1}(p+\kappa)$, $c_{1}\lambda$, $\rho$ and $\theta$ as
summarized in Table \ref{tab:parameters}, and in which $\mu_{R}$ and $\mu_{G}$
are determined by equilibrium conditions (\ref{eqcondmu0})-(\ref{eqcondmu1}).
Given the result of Proposition \ref{propsegregation} that there is either a
complete equilibrium or a partial segregation equilibrium, in which one group specializes
in the good job, we concentrate our attention to the parameter space in which
$\alpha\in\lbrack1/2,1)$, $\mu_{R}=1$ and $\mu_{G}\in\lbrack0,1)$. Thus
occupation $A$ is \textquotedblleft good\textquotedblright, and group $R$
specializes in $A$.

We first show a plot of $\Delta\Pi^{G}(1,\mu_{G})$ as a function of $\mu_{G}$
for different values of $\alpha$. This function illustrates the payoff
evaluation that a Green individual makes when deciding on its occupation. If
$\Delta\Pi^{G}(1,\mu_{G})>(<)0$, then the Green individual prefers $A$ ($B$)
if she beliefs that all Reds choose $A$ and fraction $\mu_{G}$ of Greens
choose $A$. Clearly, in an equilibrium it should hold that either $\Delta
\Pi^{G}(1,0)<0$ or $\Delta\Pi^{G}(1,\mu_{G})=0$.%

\begin{figure}
[ptb]
\begin{center}
\includegraphics[
height=2.5in,            
width=3.88in         
]%
{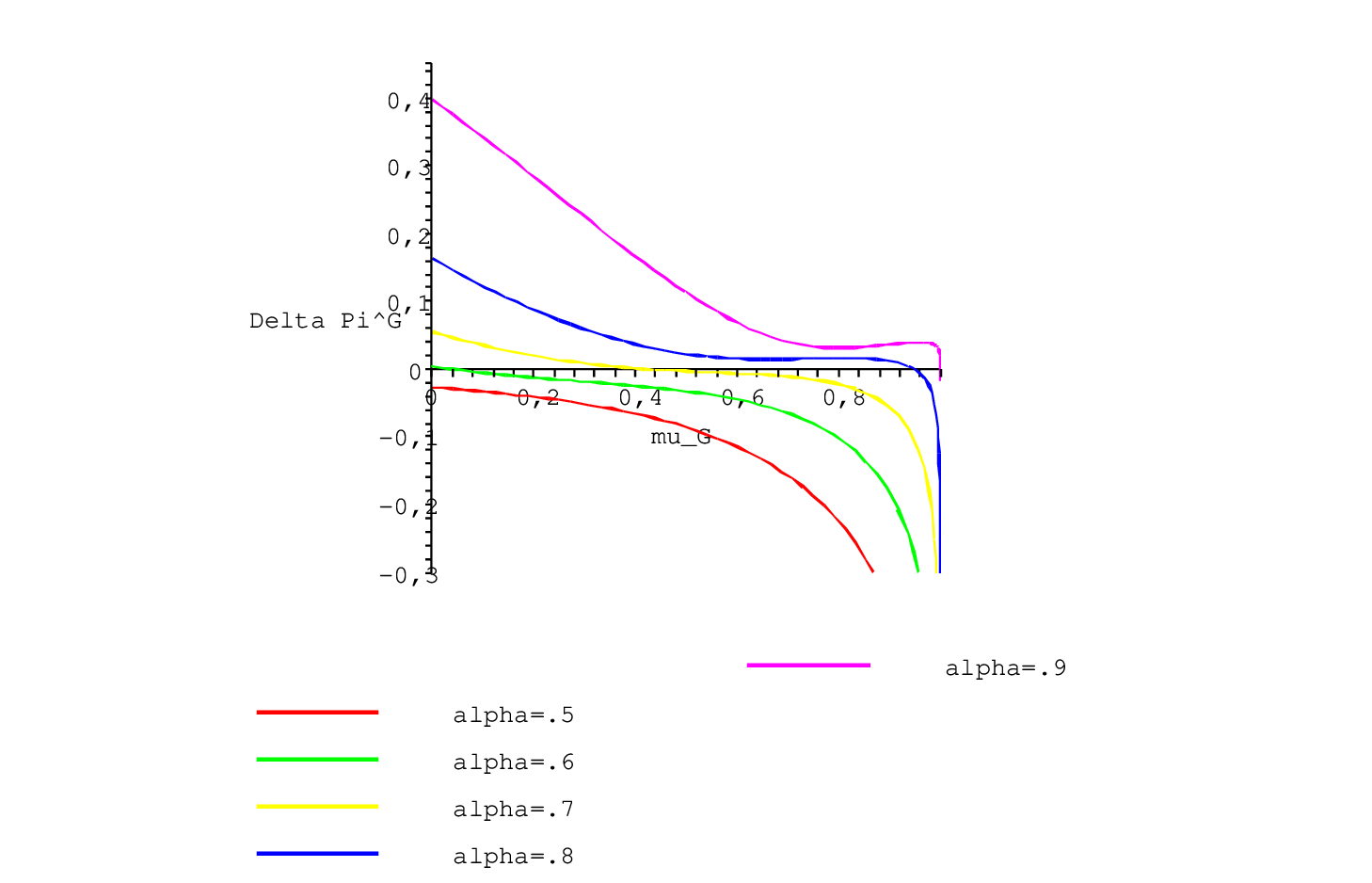}%
\caption{$\Delta\Pi^{G}(1,\mu_{G})$ as a function of $\mu_{G}$ for different
values of $\alpha$.}%
\label{figDPg1}%
\end{center}
\end{figure}

The plot is displayed in Figure \ref{figDPg1}. This plot nicely illustrates
the workings of the model. First, note that for $\alpha=.5$, $\Delta\Pi
^{G}(1,\mu_{G})$ is clearly negative, so given that the Reds choose $A,$ the
Greens prefer $B$ and complete segregation is an equilibrium. However,
$\Delta\Pi^{G}(1,\mu_{G})$ increases with $\alpha$, such that for
$\alpha>.5904\equiv\hat{\alpha}$, we have that $\Delta\Pi^{G}(1,0)>0$ and
complete segregation is not an equilibrium anymore. In that case, there is a
unique partial segregation equilibrium.\footnote{$\Delta\Pi^{G}(1,\mu_{G})$ is not monotonically
decreasing for very large $\alpha$, which implies that Assumption
\ref{assemployment} is violated. Nonetheless, there is still a unique
equilibrium for all values of $\alpha$.}

If $\alpha<.5904$ we have complete segregation as an equilibrium. In that case the equilibrium employment rates and the wages are
characterized by the Proposition \ref{prop4b}.\ Given our parameterization, employment
rates are given by:
\[
s_{A}^{R}=s_{B}^{G}=.95\mbox{ and }s_{B}^{R}=s_{A}^{G}=.9223.
\]
Wages have a particular simple form in the case of complete segregation, being
$w_{A}(1,0)=\theta\alpha$ and $w_{B}(1,0)=\theta(1-\alpha)$. Therefore, if we
define the wage gap as $G(\mu_{R},\mu_{G})=1-w_{B}(\mu_{R},\mu_{G})/w_{A}%
(\mu_{R},\mu_{G})$, then the wage gap under complete segregation is
$G(1,0)=2-1/\alpha$. Note that at $\alpha=\hat{\alpha}=.5904$, we have
\[
w_{A}(1,0)=47,233\mbox{ and }w_{B}(1,0)=32,767
\]
and the wage gap is thus $G(1,0)=.306$. Hence, a small employment gap of .9223
versus .95 is only compensated by a wage gap of 30 \%!

It is worth elaborating on this potentially large wage gap. In equilibrium,
group $R$ is completely specialized in education $A$. Therefore the wage and
unemployment gap are determined by the tradeoff that workers from group $G$
are making. Choosing education $A$ gives Greens a higher wage than
education $B$, but in education $B$ there would be few Green colleagues, and
therefore fewer contacts. Hence, choosing $A$ would result in a lower
employment rate for Green workers. Remark that this
unemployment gap may be quite small compared to the wage gap. In particular,
in our simulation, at $\alpha=\hat
{\alpha}\equiv.5904$, the wage gap of 30 \% is compensated by an employment
gap of about 3 \%. The reason for this tenfold magnification is risk aversion
of individuals. Individuals try to avoid the (small) risk of unemployment, in
which they have a payoff equal to 0, being willing to accept even major
losses in income to accomplish that.\footnote{The risk aversion effect, and
thus the wage gap, may be smaller if unemployment is only temporary, and
individuals only care about permanent income, or if agents get unemployment
benefits/ social support. On the other hand, from prospect theory it is known
that individual agents tend to emphasize small probabilities
(\citet{KahnemanTversky}), and thus the small probability of becoming
unemployed may get excessive weight in the education decision.}

We would like to know whether an even larger wage gap can be sustained in a
partial segregation equilibrium when $\alpha>\hat{\alpha}=.5904$. We therefore
plot the equilibrium wages, $w_{A}(1,\mu^{\ast})$ and $w_{B}(1,\mu^{\ast})$,
and equilibrium employments, $s_{A}^{R}(1,\mu^{\ast})$, $s_{B}^{R}(1,\mu
^{\ast})$, $s_{A}^{G}(1,\mu^{\ast})$ and $s_{B}^{G}(1,\mu^{\ast})$, as
function of $\alpha$. Remember that the equilibrium $\mu^{\ast}$ equals zero
when $\alpha\leq\hat{\alpha}$, and solves $\Delta\Pi^{G}(1,\mu^{\ast})=0$ when
$\alpha>\hat{\alpha}$. These plots are shown in Figures~\ref{figwages1}%
~and~\ref{figemployment1}.%

\begin{figure}
[ptb]
\begin{center}
\includegraphics[
height=2.5in,                 
width=2.87389in                 
]%
{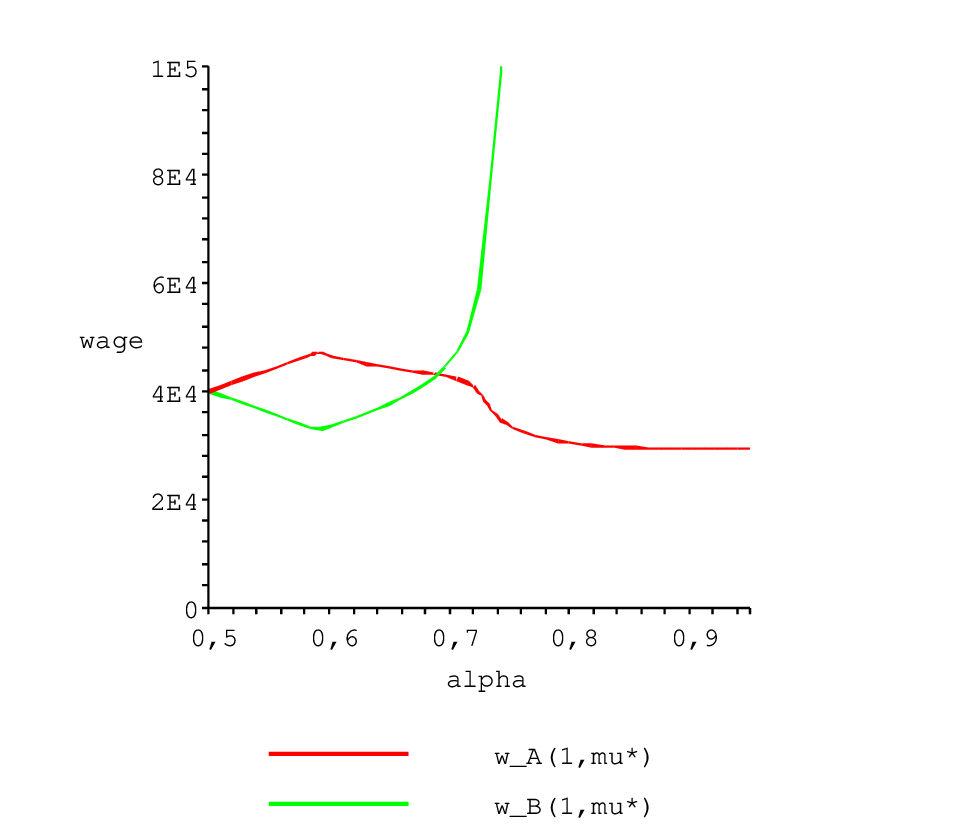}%
\caption{Equilibrium wages as function of $\alpha$.}%
\label{figwages1}%
\end{center}
\end{figure}
%

\begin{figure}
[ptb]
\begin{center}
\includegraphics[
height=2.5in,                    
width=2.89836in                           
]%
{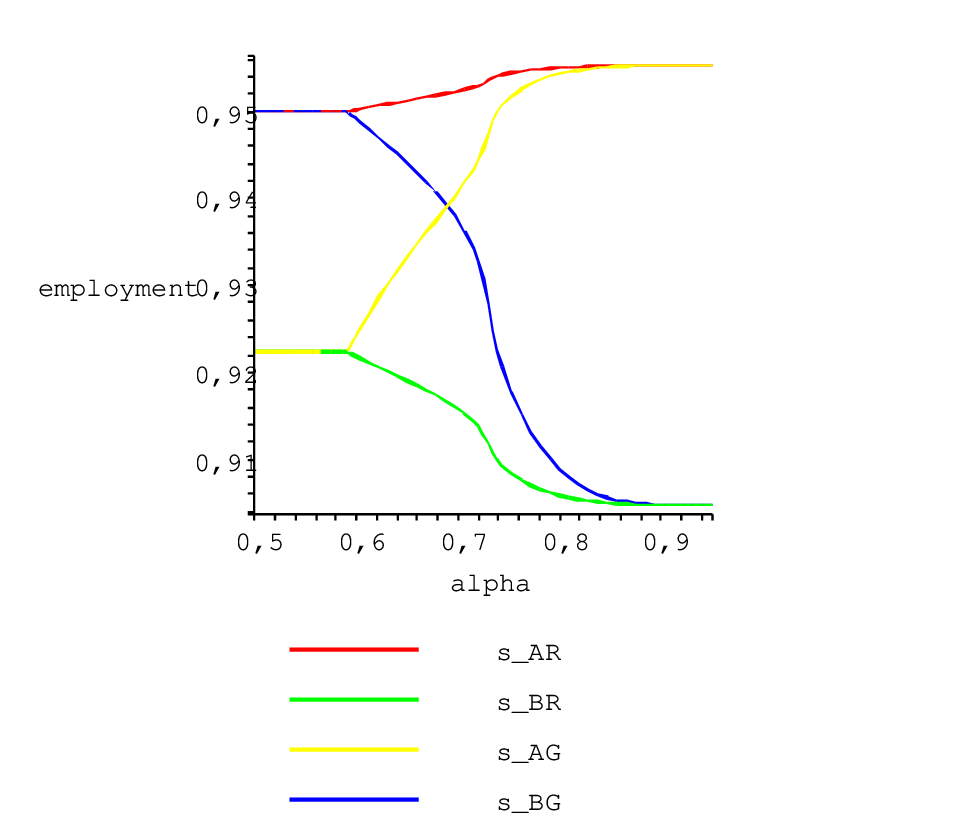}%
\caption{Equilibrium employment rates as function of $\alpha$.}%
\label{figemployment1}%
\end{center}
\end{figure}

The figures above embody well the qualitative implications of Propositions~\ref{prop4b}~and~\ref{prop6}, for the chosen parameter values/ ranges.
Moreover, for the chosen parameters we also observe that the wage gap
$G(1,\mu^{\ast})$ is maximized at $\alpha=\hat{\alpha}$. When $\alpha$ becomes
larger than $\hat{\alpha}$, the wage of $A$ declines and the wage of $B$
increases until the wage gap is reversed.

We next look at the sensitivity of $\hat{\alpha}$ with respect to the
parameter choices, as we saw that at $\alpha=\hat{\alpha}$ the wage gap is
maximized. We do this by computing the elasticities of $\hat{\alpha}$ and of
the implied wage gap $G(1,0)$ at the chosen parameters. That is, we look at
the percentage increase of $\hat{\alpha}$ and the maximum wage gap change when
a parameter increases by 1 \% . The elasticities are shown in columns~2~and~3
of Table~\ref{tab:parameters}. We note that $\hat{\alpha}$ and the implied
maximum wage gap are most sensitive to $\rho\theta$, the coefficient of
relative risk aversion. A 1 \% increase in this coefficient leads to a 2 \%
increase in the maximum wage gap. On the other hand, our calibration seems
least sensitive to the network parameters $c_{1}(p+\kappa)$ and $c_{1}\lambda
$. The maximum wage gap seems to be close to linear with respect to $1-s_{0}$,
the unemployment rate if a worker only consider direct search techniques. That
is, if we chose $s_{0}=.95$ instead of $s_{0}=.90$, it would roughly halve the
maximum wage gap.

\subsubsection{Implications for the second-best welfare outcome}

We now consider explicitly the analysis of a second-best optimum. Namely, as also briefly stated earlier in the paper, we suppose that
the government (social planner) does not have the institutions to completely
control $\mu_{R}$ and $\mu_{G}$, but that it is still able to stabilize a symmetric
equilibrium, such that $\mu_{R}=\mu_{G}=\mu^{S}$.\footnote{In the proof of
Lemma~\ref{lemma1} we showed that there exists a symmetric equilibrium, but that
it is unstable; that is, after a small deviation from the equilibrium,
individual incentives drive education choices to segregation.} Should
the government do this? In case the government stabilizes integration, we
still impose the equilibrium condition, which is in this case symmetric.
Therefore
\[
\Pi_{A}^{R}(\mu^{S},\mu^{S})=\Pi_{B}^{R}(\mu^{S},\mu^{S})=\Pi_{A}^{G}(\mu
^{S},\mu^{S})=\Pi_{B}^{G}(\mu^{S},\mu^{S}).
\]
Hence, in the symmetric case there is complete equality. On the other hand, in
the case of segregation, we consider the equilibrium allocation $(\mu_{R}%
,\mu_{G})=(1,\mu^{\ast})$, such that Reds obtain a higher payoff than Greens.
Therefore, we might face a tradeoff when assessing an integration policy. It
enforces equality, but it might decrease employment.

To this purpose we plot the increase in social welfare from such an
integration policy, $I=W(\mu^{S},\mu^{S})/W(1,\mu^{\ast})-1$, as function of
$\alpha$. Figure~\ref{figsecondbest1} shows this plot.%

\begin{figure}
[ptbh]
\begin{center}
\includegraphics[
height=2in,                     
width=2in                          
]%
{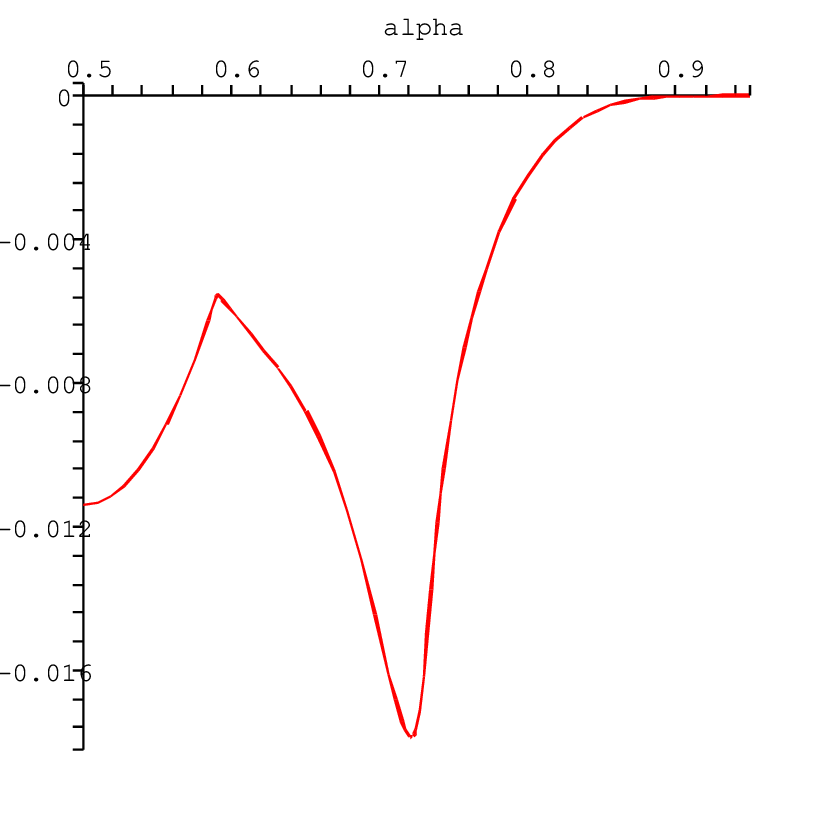}%
\caption{Percentage increase in welfare of a policy that enforces perfect
integration.}%
\label{figsecondbest1}%
\end{center}
\end{figure}

We observe that $I$ is negative for all values of $\alpha$. So for the chosen
parameters the integration policy is never preferred. In this world, people are better off segregated.

Our results are clear: with a utilitarian welfare function, a second best
policy involves a \textquotedblleft laissez-faire\textquotedblright\ policy,
such that society becomes segregated. The intuition behind this result is
twofold. First, in the case of partial segregation the equilibrium is
determined by the Green workers. They trade off a benefit in wage against a
loss in employment. Their individual incentives therefore already put a limit
on the amount of wage inequality that can be sustained in equilibrium. Second,
an integration policy would lead to lower employment rates. Or, with
risk-averse individuals, the society is willing to tolerate some inequality in exchange for higher employment rates.

We finally remark that an integration policy is only beneficial when society
has \emph{additional} distributional concerns that are not captured by the
concavity of the individual utility function. For example, consider the case
of a maximin social welfare function: $W_{\min}=\min_{i}\Pi_{i}$. In the
integrated case, $\mu_{R}=\mu_{G}=\mu^{S}$, everyone obtains the same payoff,
whereas in the segregated case workers from group $G$ are worse off.
Therefore, $W_{\min}(1,\mu^{\ast})=\Pi_{B}^{G}(1,\mu^{\ast})$ and $W_{\min
}(\mu^{S},\mu^{S})=\Pi_{B}^{G}(\mu^{S},\mu^{S})$. We show a comparison of
these two payoffs, $\Pi_{B}^{G}(\mu^{S},\mu^{S})/\Pi_{B}^{G}(1,\mu^{\ast})-1$,
in Figure~\ref{figmaximin1}.%

\begin{figure}
[h]
\begin{center}
\includegraphics[
height=2in,                    
width=2.17164in                     
]%
{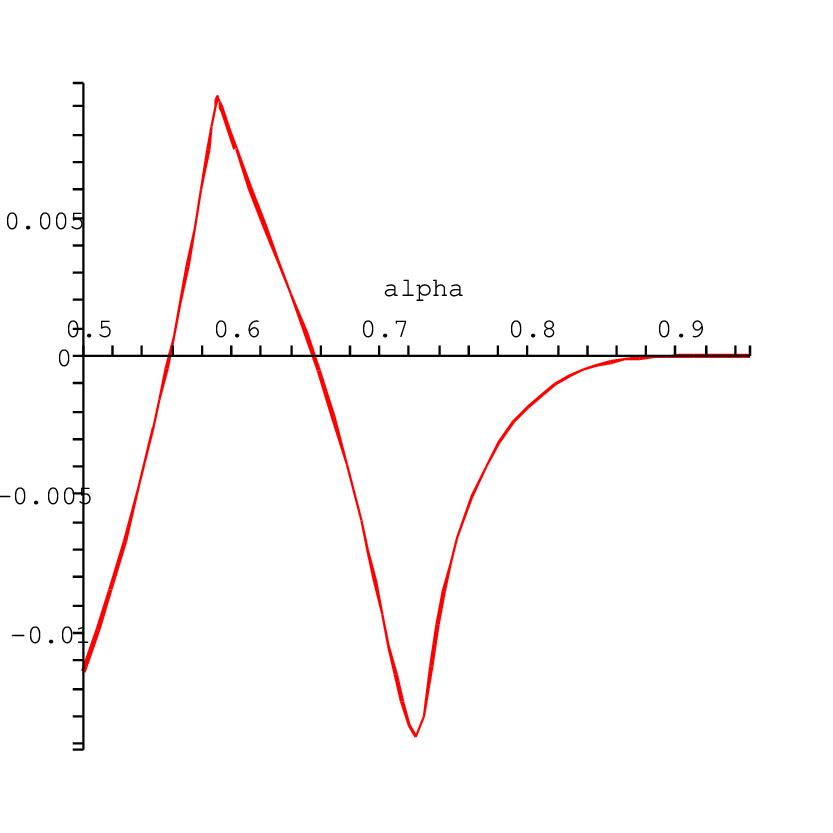}%
\caption{Percentage increase in Green workers' payoffs of a policy enforcing
perfect integration.}%
\label{figmaximin1}%
\end{center}
\end{figure}

Observe that Greens would benefit from integration for values
of $\alpha$ around $\hat{\alpha}$, where the wage gap is particularly large.
In such case, strong distributional concerns would justify
integration\footnote{\citet{Graham et al} provide a conceptual framework to test
for equity-efficiency tradeoffs, focusing on \textit{local segregation
inequality effects}. One can rule segregation-increasing efficiency gains
unacceptable if they increase inequality across groups.}.

\section{Conclusion}

\label{secsebi6}

We have proposed a social interaction model with jobs obtained through a
stochastic network of contacts, after individual career decisions had been
made endogenously. Even with a tiny amount of within-social group homophily,
a partial occupational segregation equilibrium, in which one
group fully specializes and the other mixes over two
career tracks, can be sustainable -- with between-group wage and employment gaps also in line with observed gender or racial labor market disparities.

We have also analysed the implications of our model from a social planner's perspective.
In the first best social welfare optimum, segregation is
socially preferred. Subject to proper calibration of our model
parameters, a second best social welfare analysis supports a laissez-faire
policy where society also becomes segregated, shaped by individual
incentives. Both conclusions are valid in light of reasonable concavity
features of the individual utility function. Our results imply that an integration policy could only
be justified under distributional concerns beyond typical individual utilities; or under more complex individual preferences than those analysed in our model, for instance with direct utility derived from exposure to more diversity; and therefore that such justifications should take center stage in social integration debates. 

While our model can relate empirical patterns of educational and occupational
segregation to wage and employment gaps between gender or racial groups,
other factors have also been documented to play a significant role.
This framework should thus be seen as \textit{complement} to a number of
alternatives, including the classical theories of taste discrimination or
rational bias by employers. In future empirical research it would be pertinent to assess
the relative strength of our mechanism vis-\`{a}-vis those other channels in
explaining labor market disparities.

Our model easily allows for interesting extensions. One such avenue for future
research is to extend with an analysis of minority versus majority groups, by
modeling the interaction between social groups of unequal
sizes.\footnote{A recent paper by \citet{Okafor 2020} models minority
vs. majority labor outcome differentials, also in a social network setup with
homophily;\ however, unlike our straightforward model extension envisaged
above, \citet{Okafor 2020} hinges crucially on the inequality in the size of
the social groups considered.} Another avenue is to consider heterogeneity in
productivity. The latter would allow to analyze the mismatch of workers to
firms due to network effects. Finally, one could further explore 
whether changes in our parsimonious economy fundamentals, e.g., in the number of (initial or newly entering) occupations and 
in the degree of complementarity between these occupations, could alter the equilibrium towards less segregation relative to the first-best.

\appendix

\section{Proofs}

The proof of Proposition~\ref{propsegregation} uses the following lemma:

\begin{lemma}
Suppose Assumptions \ref{asswage} and \ref{assemployment} hold. A stable equilibrium $(\mu_{R}^{\ast},\mu_{G}^{\ast})$ in which $0<\mu
_{R}^{\ast}<1$ and $0<\mu_{G}^{\ast}<1$ does not exist. \label{lemma1}
\end{lemma}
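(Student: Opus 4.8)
The plan is to proceed in three parts: (1) show that any interior equilibrium must be \emph{symmetric}, $\mu_R^\ast=\mu_G^\ast$; (2) evaluate the Jacobian $D\Delta\Pi$ at such a symmetric point; and (3) read off a contradiction between the two stability conditions. Everything would rest on one structural observation that I would establish at the outset: the wage functions $w_A(\mu_R,\mu_G)$ and $w_B(\mu_R,\mu_G)$ are \emph{symmetric} in their two arguments. Indeed, interchanging $\mu_R$ and $\mu_G$ in (\ref{eqsax})--(\ref{eqsbx}) merely swaps $s_A^R\leftrightarrow s_A^G$ and $s_B^R\leftrightarrow s_B^G$, hence leaves $L_A,L_B$ in (\ref{eqla})--(\ref{eqlb}) unchanged, hence leaves $w_A=\partial F/\partial L_A$ and $w_B=\partial F/\partial L_B$, and therefore $U(w_A),U(w_B)$, unchanged. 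I would also record that $U(w_A),U(w_B)>0$, since $F$ is strictly increasing (so wages are positive at interior $\mu$) and $U(0)=0$ with $U$ strictly increasing.

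For part (1): suppose an interior equilibrium $(\mu_R^\ast,\mu_G^\ast)$ with $0<\mu_R^\ast,\mu_G^\ast<1$ exists and, without loss of generality, $\mu_R^\ast\ge\mu_G^\ast$. Condition (\ref{eqcondmu}) gives $\Delta\Pi^R=\Delta\Pi^G=0$ at this point; subtracting, and using $\Pi_A^X=s_A^X U(w_A)$, $\Pi_B^X=s_B^X U(w_B)$ with $w_A,w_B$ common to both groups, yields
\[
0=\bigl(s_A^R-s_A^G\bigr)U(w_A)-\bigl(s_B^R-s_B^G\bigr)U(w_B).
\]
Since $\lambda>0$ and $s$ is increasing, (\ref{eqsax})--(\ref{eqsbx}) give $s_A^R-s_A^G\ge 0$ and $s_B^R-s_B^G\le 0$, each with equality if and only if $\mu_R^\ast=\mu_G^\ast$. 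As $U(w_A),U(w_B)>0$, the right-hand side is then a sum of two nonnegative terms, so it can vanish only when $\mu_R^\ast=\mu_G^\ast=:\mu^\ast$.

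For part (2): I would differentiate $\Delta\Pi^R=s_A^R U(w_A)-s_B^R U(w_B)$ at $(\mu^\ast,\mu^\ast)$. By the symmetry of $w_A,w_B$ one has $\partial w_A/\partial\mu_R=\partial w_A/\partial\mu_G$ and $\partial w_B/\partial\mu_R=\partial w_B/\partial\mu_G$ there, so in the difference $\phi_1-\phi_2:=\partial\Delta\Pi^R/\partial\mu_R-\partial\Delta\Pi^R/\partial\mu_G$ all wage-channel terms cancel and only the homophily channel survives; a short computation from (\ref{eqsax})--(\ref{eqsbx}) gives
\[
\phi_1-\phi_2=\frac{\lambda}{2}\Bigl(s'\!\bigl((p+\kappa+\tfrac{\lambda}{2})\mu^\ast\bigr)U(w_A)+s'\!\bigl((p+\kappa+\tfrac{\lambda}{2})(1-\mu^\ast)\bigr)U(w_B)\Bigr)>0,
\]
because $s'>0$, $U(w_A),U(w_B)>0$ and $\lambda>0$. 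Symmetry also gives $\Delta\Pi^G(\mu_R,\mu_G)=\Delta\Pi^R(\mu_G,\mu_R)$, so at $(\mu^\ast,\mu^\ast)$ the Jacobian has the form $\begin{pmatrix}\phi_1&\phi_2\\\phi_2&\phi_1\end{pmatrix}$ and hence $\det D\Delta\Pi(\mu^\ast,\mu^\ast)=(\phi_1-\phi_2)(\phi_1+\phi_2)$.

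For part (3): a weakly stable equilibrium satisfies in particular the weak forms of the stability conditions, i.e. (i) $\partial\Delta\Pi^R/\partial\mu_R\le 0$, that is $\phi_1\le 0$, and (ii) $\det D\Delta\Pi(\mu^\ast,\mu^\ast)\ge 0$. Since $\phi_1-\phi_2>0$ strictly, (ii) forces $\phi_1+\phi_2\ge 0$; adding this to $\phi_1-\phi_2>0$ gives $2\phi_1>0$, i.e. $\phi_1>0$, contradicting (i). Hence no such equilibrium exists. I expect the only genuinely substantive point to be the symmetry of $w_A$ and $w_B$ in $(\mu_R,\mu_G)$: it is what collapses the reduction in part (1) to $\mu_R^\ast=\mu_G^\ast$ and, in part (2), makes all wage-side terms cancel so that $\phi_1-\phi_2$ is driven purely by $\lambda$ and is therefore strictly positive; once that is in hand, the remaining steps are routine.
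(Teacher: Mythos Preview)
Your proof is correct and follows the same three-step approach as the paper: reduce any interior equilibrium to the diagonal $\mu_R^\ast=\mu_G^\ast$, establish $\phi_1-\phi_2>0$ at that point via the symmetry of $w_A,w_B$, and contradict stability. The only variation is in the last step: the paper invokes Assumption~\ref{assemployment} to obtain $\partial\Delta\Pi^X/\partial\mu_X<0$ as a model property (so $\phi_2<\phi_1<0$ and $\det<0$ directly), whereas you take $\phi_1\le0$ from weak stability condition (i) and play it against (ii)---a mild simplification, since your argument then does not actually require Assumption~\ref{assemployment} (nor Assumption~\ref{asswage}).
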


\begin{proof}
Suppose $(\mu_{R}^{\ast},\mu_{G}^{\ast})$ is a stable equilibrium, and
$\mu_{R}^{\ast}\in(0,1)$ and $\mu_{G}^{\ast}\in(0,1)$. By
condition~(\ref{eqcondmu})
\begin{equation}
\Pi_{A}^{R}(\mu_{R}^{\ast},\mu_{G}^{\ast})=\Pi_{B}^{R}(\mu_{R}^{\ast},\mu
_{G}^{\ast})\mbox{ and }\Pi_{A}^{G}(\mu_{R}^{\ast},\mu_{G}^{\ast})=\Pi_{B}%
^{G}(\mu_{R}^{\ast},\mu_{G}^{\ast}) \label{eqcondrg}%
\end{equation}
Substituting (\ref{pi_r_a})-(\ref{pi_r_b}) into (\ref{eqcondrg}) and
rewriting, these equations become
\begin{equation}
\frac{U(w_{A}(\mu_{R}^{\ast},\mu_{G}^{\ast}))}{U(w_{B}((\mu_{R}^{\ast},\mu
_{G}^{\ast}))}=\frac{s_{B}^{R}(\mu_{R}^{\ast},\mu_{G}^{\ast})}{s_{A}^{R}%
(\mu_{R}^{\ast},\mu_{G}^{\ast})}=\frac{s_{B}^{G}(\mu_{R}^{\ast},\mu_{G}^{\ast
})}{s_{A}^{G}(\mu_{R}^{\ast},\mu_{G}^{\ast})}. \label{eqcondrg2}%
\end{equation}
Since $\lambda>0$, $\mu_{R}^{\ast}>\mu_{G}^{\ast}$ implies $s_{A}^{R}%
>s_{A}^{G}$ and $s_{B}^{R}<s_{B}^{G}$. But this means that if $\mu_{R}^{\ast
}>\mu_{G}^{\ast}$, then
\[
\frac{s_{B}^{R}(\mu_{R}^{\ast},\mu_{G}^{\ast})}{s_{A}^{R}(\mu_{R}^{\ast}%
,\mu_{G}^{\ast})}<\frac{s_{B}^{G}(\mu_{R}^{\ast},\mu_{G}^{\ast})}{s_{A}%
^{G}(\mu_{R}^{\ast},\mu_{G}^{\ast})}.
\]
which contradicts (\ref{eqcondrg2}). The same reasoning holds for $\mu
_{R}^{\ast}<\mu_{G}^{\ast}$. Hence, it must be that $\mu_{R}^{\ast}=\mu
_{G}^{\ast}$.
However $(\mu_{R}^{*}, \mu_{G}^{*})$ with $\mu_{R}^{*}=\mu_{G}^{*}$ cannot be
a stable equilibrium. To see this, suppose that $(\mu^{*},\mu^{*})$ with
$\mu^{*}\in(0,1)$ is a stable equilibrium. Hence $\Pi_{A}^{X}(\mu^{*},\mu
^{*})=\Pi_{B}^{X}(\mu^{*},\mu^{*})$ for $X\in\{R,G\}$ and $\frac
{\partial\Delta\Pi^{X}}{\partial\mu_{X}}< 0$ at $\mu_{R}=\mu_{G}=\mu^{*}$, and
det$(G(\mu^{*},\mu^{*})> 0$, where $G(\mu)=D\Delta\Pi(\mu)$ is the Jacobian of
$\Delta\Pi\equiv(\Delta\Pi^{R}, \Delta\Pi^{G})$ with respect to $\mu\equiv
(\mu_{R}, \mu_{G})$.
Since $\lambda>0$, it must be that
\begin{equation}
\frac{\partial s_{A}^{X}}{\partial\mu_{X}}>\frac{\partial s_{A}^{X}}%
{\partial\mu_{Y}}>0 \label{eqprop2dsadmu}%
\end{equation}
and
\begin{equation}
\frac{\partial s_{B}^{X}}{\partial\mu_{X}}<\frac{\partial s_{B}^{X}}%
{\partial\mu_{Y}}<0 \label{eqprop2dsbdmu}%
\end{equation}
for $X,Y\in\{R,G\}$ and $Y\neq X$. Furthermore, if $\mu_{R}=\mu_{G}=\mu^{\ast
}$, then $s_{A}^{X}=s_{A}^{Y}$, $\frac{\partial L_{A}}{\partial\mu_{X}}%
=\frac{\partial L_{A}}{\partial\mu_{Y}}$, $\frac{\partial L_{B}}{\partial
\mu_{X}}=\frac{\partial L_{B}}{\partial\mu_{Y}}$, and therefore,
\begin{equation}
\frac{\partial w_{A}}{\partial\mu_{X}}=\frac{\partial w_{A}}{\partial\mu_{Y}}
\label{eqprop2dwadmu}%
\end{equation}
and
\begin{equation}
\frac{\partial w_{B}}{\partial\mu_{X}}=\frac{\partial w_{B}}{\partial\mu_{Y}}.
\label{eqprop2dwbdmu}%
\end{equation}
By using (\ref{eqprop2dsadmu})-(\ref{eqprop2dwbdmu}) from above, together with a substitution of (\ref{pi_r_a})-(\ref{pi_r_b}) in
Assumption~\ref{assemployment} from the main body of the paper, and a somewhat tedious but straightforward algebra, it follows that, at $\mu_{R}=\mu_{G}=\mu
^{\ast}$,
\[
\frac{\partial\Delta\Pi^{X}}{\partial\mu_{Y}}<\frac{\partial\Delta\Pi^{X}%
}{\partial\mu_{X}}<0.
\]
for $X,Y\in\{R,G\}$, $X\neq Y$.
But given the latest established inequalities, we can immediately see from the expression of the determinant of the Jacobian $G(\mu)=D\Delta\Pi(\mu)$ introduced above,  that det$(G(\mu^{\ast},\mu^{\ast}))<0$, which contradicts the stability assumption.
\end{proof}

\bigskip

\noindent PROPOSITION \ref{propsegregation} \begin{proof}
(i) If (\ref{prop3cond1}) holds,
then
\[
\Pi_{A}^{R}(1,0)>\Pi_{B}^{R}(1,0)\mbox{ and }\Pi_{A}^{G}(1,0)<\Pi_{B}%
^{G}(1,0).
\]
Hence, $(\mu_{R},\mu_{G})=(1,0)$ is clearly a stable equilibrium. The same is
true for $(\mu_{R},\mu_{G})=(0,1)$. Lemma~\ref{lemma1} and
Assumption~\ref{assemployment} ensure that these are the only two equilibria.
(ii) If (\ref{prop3cond2}) is true, then
\begin{equation}
\Pi_{A}^{G}(1,0)>\Pi_{B}^{G}(1,0). \label{proof5deviate}%
\end{equation}
Furthermore, from Assumption~\ref{asswage} we know that $\frac{\partial
\Delta\Pi^{G}(1,\mu_{G}) }{\partial\mu_{G}} < 0$ for all $\mu_{G} \in[0,1]$.
It follows from Assumption~\ref{asswage}, equation (\ref{proof5deviate}) and
continuity of $F$, $U$ and $s$, that there must be a unique $\mu^{*}$, such
that
\[
\Pi_{A}^{G}(1,\mu^{*})=\Pi_{B}^{G}(1,\mu^{*}).
\]
Moreover, $s_{A}^{R}(1,\mu^{*})>s_{A}^{G}(1,\mu^{*})$ and $s_{B}^{G}(1,\mu
^{*})>s_{B}^{R}(1,\mu^{*})$, so we have at $(\mu_{R},\mu_{G})=(1,\mu^{*})$
\begin{equation}
\Pi_{A}^{R}>\Pi_{B}^{G}=\Pi_{A}^{G}>\Pi_{B}^{R}. \label{eqpicomparisons}%
\end{equation}
It is therefore clear that $(\mu_{R},\mu_{G})=(1,\mu^{*})$ is a stable
equilibrium. The same is true for $(\mu_{R},\mu_{G})=(\mu^{*},1)$.
To show that there is no other equilibrium, note that by (\ref{prop3cond2}),
$\Pi_{A}^{R}(1,0)>\Pi_{B}^{R}(1,0).$ Assumption~\ref{assemployment} then
implies that $\Pi_{A}^{R}(\mu,0)>\Pi_{B}^{R}(\mu,0)$ for all $\mu\in[0,1]$.
Hence, $(\mu,0)$ and, similarly, $(0,\mu)$ cannot be an equilibrium. By
Lemma~\ref{lemma1} we also know that there is no mixed equilibrium.
\end{proof}

\bigskip


\noindent PROPOSITION \ref{prop4b} \begin{proof}
The equations follow almost directly. We
have
\[
s_{A}^{R}(1,0)=s_{B}^{G}(1,0)=s_{H}>s_{L}=s_{B}^{R}(1,0)=s_{A}^{G}(1,0).
\]
Further, by assumption $w_{A}\geq w_{B}$ at $(\mu_{R},\mu_{G})=(1,0)$.
Finally, at $(\mu_{R},\mu_{G})=(1,0)$
\[
U(w_{A})s_{A}^{R}\geq U(w_{B})s_{B}^{G}\geq U(w_{A})s_{A}^{G}\geq
U(w_{B})s_{B}^{R},
\]
and this is equivalent to (\ref{prop3comppayoff}).
\end{proof}

\bigskip


\noindent PROPOSITION \ref{prop6} \begin{proof}
Consider the stable equilibrium at
$(1,\mu^{\ast})$. Since it is an equilibrium we know that
\[
\Pi_{A}^{G}(1,\mu^{\ast})=\Pi_{B}^{G}(1,\mu^{\ast}).
\]
In the proof of Proposition~\ref{propsegregation}, equation
(\ref{eqpicomparisons}), we already demonstrated the inequality
(\ref{prop5comppayoff}) Further, by Assumption~\ref{assemployment} we know
that $\Delta\Pi^{G}(1,\mu_{G})$ is strictly monotonically decreasing in
$\mu_{G}$.
(i) If $\hat{\mu}<\frac{\lambda}{2(p+\kappa+\lambda)}$, then $s_{A}^{G}%
(1,\hat{\mu})<s_{B}^{G}(1,\hat{\mu})$. As $w_{A}(1,\hat{\mu}) =w_{B}%
(1,\hat{\mu})$ it must be that
\[
\Pi_{A}^{G}(1,\hat{\mu})<\Pi_{B}^{G}(1,\hat{\mu}).
\]
But then it also must be that $\mu^{\ast}<\hat{\mu}$. As we consider a partial segregation equilibrium, we know that $\mu^{\ast}>0$. Hence, $0<\mu^{\ast}<\hat{\mu}$ and
$w_{A}(1,\hat{\mu^{\ast}})>w_{B}(1,\hat{\mu^{\ast}})$, as $w_{A}(\mu_{R}%
,\mu_{G})$ is a decreasing function, whereas $w_{B}(\mu_{R},\mu_{G})$ is increasing. But then it also follows that $s_{A}^{G}(1,\hat{\mu^{\ast}})<s_{B}^{G}(1,\hat{\mu^{\ast}})$, 
since $\Pi_{A}^{G}(1,\hat{\mu^{\ast}})=\Pi_{B}^{G}(1,\hat{\mu^{\ast}})$ in equilibrium, as stated above.

(ii) If $\hat{\mu}>\frac{\lambda}{2(p+\kappa+\lambda)}$, then $s_{A}%
^{G}(1,\hat{\mu})>s_{B}^{G}(1,\hat{\mu})$ and $\Pi_{A}^{G}(1,\hat{\mu}%
)<\Pi_{B}^{G}(1,\hat{\mu}).$ But then $\mu^{\ast}>\hat{\mu}$. By
Assumption~\ref{asswage} we know that $\mu^{\ast}<1$. Hence, $\hat{\mu}%
<\mu^{\ast}<1$, and therefore $w_{A}(1,\hat{\mu^{\ast}})<w_{B}(1,\hat
{\mu^{\ast}})$. Analous to the above, this also further implies now that $s_{A}^{G}(1,\hat{\mu^{\ast}})>s_{B}^{G}(1,\hat{\mu^{\ast}})$,
since $\Pi_{A}^{G}(1,\hat{\mu^{\ast}})=\Pi_{B}^{G}(1,\hat{\mu^{\ast}})$ in equilibrium.
\end{proof}

\bigskip

We next continue with the proof of Proposition~\ref{propwelfare}. This proof
uses the following lemma:

\begin{lemma}
Suppose that for all $x\in[0, (p+\kappa+\lambda)/2]$
\begin{equation}
s^{\prime\prime}(x) > -\frac{4}{\lambda}\ s^{\prime}(x). \label{eqlemma2}%
\end{equation}

\begin{itemize}
\item[(i)] If $\mu_{X}>\mu_{Y}$ for $X,Y \in\{R,G\}$, then
\begin{equation}
\frac{\partial L_{A}}{\partial\mu_{X}}(\mu_{R}, \mu_{G})>\frac{\partial L_{A}%
}{\partial\mu_{Y}}(\mu_{R}, \mu_{G})>0, \label{eqlemmahyp1}%
\end{equation}
and
\begin{equation}
\frac{\partial L_{B}}{\partial\mu_{Y}}(\mu_{R}, \mu_{G})<\frac{\partial L_{B}%
}{\partial\mu_{X}}(\mu_{R}, \mu_{G})<0. \label{eqlemmahyp2}%
\end{equation}

\item[(ii)] If $\mu_{R}=\mu_{G}=\mu$, then
\begin{equation}
\frac{\partial^{2} L_{A}}{(\partial\mu_{X})^{2}}(\mu, \mu)>\frac{\partial^{2}
L_{A}}{\partial\mu_{X} \partial\mu_{Y}}(\mu,\mu), \label{eqlemmahyp3}%
\end{equation}
and
\begin{equation}
\frac{\partial^{2} L_{B}}{(\partial\mu_{X})^{2}}(\mu, \mu)>\frac{\partial^{2}
L_{B}}{\partial\mu_{X} \partial\mu_{Y}}(\mu,\mu). \label{eqlemmahyp4}%
\end{equation}

\end{itemize}

\label{lemma2}
\end{lemma}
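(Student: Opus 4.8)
The idea is to reduce every assertion to a one–dimensional estimate on $s$, and to observe that the constant $4/\lambda$ in the hypothesis is exactly the one that makes the $s''$–terms below harmless. Throughout write $\beta\equiv(p+\kappa+\lambda)/2$ and $\gamma\equiv(p+\kappa)/2$, so $\beta>0$, $\gamma\ge0$ and $\beta-\gamma=\lambda/2$. By (\ref{eqsax}) and (\ref{eqla}),
\[
L_A(\mu_R,\mu_G)=\tfrac12\big[\mu_R\,s(a_R)+\mu_G\,s(a_G)\big],\qquad a_R=\beta\mu_R+\gamma\mu_G,\quad a_G=\gamma\mu_R+\beta\mu_G .
\]
By (\ref{eqsbx}) and (\ref{eqlb}), $L_B(\mu_R,\mu_G)$ has the \emph{same} form in the variables $\nu_X\equiv1-\mu_X$, namely $L_B=\tfrac12[\nu_R\,s(\beta\nu_R+\gamma\nu_G)+\nu_G\,s(\gamma\nu_R+\beta\nu_G)]$; since $\partial/\partial\mu_X=-\partial/\partial\nu_X$ and $\mu_X>\mu_Y\Leftrightarrow\nu_X<\nu_Y$, the $L_B$–statements in (i) and (ii) follow from the $L_A$–statements (with the roles of $X,Y$ interchanged) plus sign flips. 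So it suffices to treat $L_A$. I use the hypothesis only at points of $[0,(p+\kappa+\lambda)/2]$; in the regime $\mu_R+\mu_G\le1$ — the one needed for $L_A$ in the applications, the complementary one being handled through $L_B$ — we have $\bar\mu\le\tfrac12$, hence $a_R,a_G\in[0,(p+\kappa+\lambda)/2]$.

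For part (i), differentiating gives $\partial_{\mu_G}L_A=\tfrac12\big[s(a_G)+\mu_R\gamma\,s'(a_R)+\mu_G\beta\,s'(a_G)\big]$, a sum of nonnegative terms (recall $\gamma\ge0$, $\beta>0$, $s'>0$, $\mu\ge0$) with $s(a_G)>0$, so $\partial_{\mu_G}L_A>0$ — no use of the hypothesis here. For the comparison take $\mu_R>\mu_G$ (the case $\mu_G>\mu_R$ is symmetric); combining the two first partials,
\[
\partial_{\mu_R}L_A-\partial_{\mu_G}L_A=\tfrac12\Big[\big(s(a_R)-s(a_G)\big)+\tfrac\lambda2\big(\mu_R s'(a_R)-\mu_G s'(a_G)\big)\Big].
\]
Now move along the segment keeping $\mu_R+\mu_G$ fixed while opening the gap, parametrised by $D\equiv\tfrac\lambda4(\mu_R-\mu_G)\ge0$. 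With $m\equiv\tfrac12(a_R+a_G)=\tfrac{\beta+\gamma}{2}(\mu_R+\mu_G)$ (constant along the segment) one gets $a_R=m+D$, $a_G=m-D$ and, solving the linear system, $\tfrac\lambda2\mu_R=\rho m+D$, $\tfrac\lambda2\mu_G=\rho m-D$ with $\rho\equiv\lambda/(2p+2\kappa+\lambda)\in(0,1]$. Writing $\Psi(D)$ for the bracket above, $\Psi(0)=0$, and differentiating in $D$ and regrouping the $s'$– and $s''$–terms,
\[
\Psi'(D)=\Big[2s'(a_R)+\tfrac\lambda2\mu_R\,s''(a_R)\Big]+\Big[2s'(a_G)+\tfrac\lambda2\mu_G\,s''(a_G)\Big].
\]
Each bracket is $>0$: if $s''\ge0$ this is trivial, and if $s''(x)<0$ then, since $0\le\tfrac\lambda2\mu\le\tfrac\lambda2$ (as $\mu\in[0,1]$), the hypothesis gives $\tfrac\lambda2\mu\,s''(x)\ge\tfrac\lambda2 s''(x)>\tfrac\lambda2\big(-\tfrac4\lambda\big)s'(x)=-2s'(x)$. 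Hence $\Psi'(D)>0$, so $\Psi(D)>0$ for $D>0$, i.e. $\partial_{\mu_R}L_A>\partial_{\mu_G}L_A$; with $\partial_{\mu_G}L_A>0$ this is (\ref{eqlemmahyp1}), and (\ref{eqlemmahyp2}) follows by the reflection.

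For part (ii), differentiating twice and evaluating at $\mu_R=\mu_G=\mu$ (so $a_R=a_G=a\equiv(\beta+\gamma)\mu$) gives $\partial_{\mu_R}^2L_A=\tfrac12[2\beta s'(a)+\mu(\beta^2+\gamma^2)s''(a)]$ and $\partial_{\mu_R}\partial_{\mu_G}L_A=\tfrac12[2\gamma s'(a)+2\mu\beta\gamma\,s''(a)]$, so that
\[
\partial_{\mu_R}^2L_A-\partial_{\mu_R}\partial_{\mu_G}L_A=\tfrac{\beta-\gamma}{2}\big[2s'(a)+\mu(\beta-\gamma)s''(a)\big]=\tfrac\lambda4\Big[2s'(a)+\tfrac{\lambda\mu}{2}s''(a)\Big],
\]
which is $>0$ by the same one–line estimate ($\tfrac{\lambda\mu}{2}\le\tfrac\lambda2$ and $s''>-\tfrac4\lambda s'$). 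This gives (\ref{eqlemmahyp3}), and (\ref{eqlemmahyp4}) follows by reflection.

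The only genuinely delicate step is the reparametrisation in part (i): one must check that along the constant–$(\mu_R+\mu_G)$ segment the weights split as $\tfrac\lambda2\mu_R=\rho m+D$ and $\tfrac\lambda2\mu_G=\rho m-D$, so that $\Psi'(D)$ collapses into the two clean brackets above — this is precisely where the threshold $4/\lambda$ enters with no slack. Everything else is routine differentiation together with the elementary inequality ``$s''>-\tfrac4\lambda s'$ and $0\le\tfrac\lambda2\mu\le\tfrac\lambda2$ imply $2s'+\tfrac\lambda2\mu\,s''>0$''; in particular no convexity of $L_A$ or $L_B$ themselves is needed, only the pointwise control of $s''$ by $s'$.
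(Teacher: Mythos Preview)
Your proof is correct and follows essentially the same route as the paper's. The paper also computes $\partial L_A/\partial\mu_X-\partial L_A/\partial\mu_Y$, rewrites it as $h(\mu_X)-h(\mu_Y)$ with $h(\mu)=s((p+\kappa)\bar\mu+\lambda\mu/2)+\tfrac{\lambda\mu}{2}s'((p+\kappa)\bar\mu+\lambda\mu/2)$ at fixed $\bar\mu$, and checks $h'>0$ under the hypothesis; your $D$--parametrisation is exactly this monotonicity argument in different coordinates (indeed $(2/\lambda)h'(\mu)=2s'+\tfrac{\lambda\mu}{2}s''$, your bracket). Your symmetry reduction $L_B(\mu_R,\mu_G)=L_A(1-\mu_R,1-\mu_G)$ (in the obvious sense) is a clean shortcut for what the paper handles by ``a similar derivation''. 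One small remark: your domain caveat is warranted---the arguments $a_R,a_G$ can reach $p+\kappa+\lambda/2$, not just $(p+\kappa+\lambda)/2$, so the hypothesis as stated does not literally cover all $(\mu_R,\mu_G)\in[0,1]^2$; the paper's proof has the same gap (its displayed argument ``$(p+\kappa+\lambda)\mu/2$'' in part (ii) is a slip for $(p+\kappa+\lambda/2)\mu$), but this does not affect your argument where you actually use it.
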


\begin{proof}
(i) It is easy to derive that for $X\in\{R,G\}$:
\begin{equation}
\frac{\partial L_{A}}{\partial\mu_{X}} = \frac{1}{2}\left(  s_{A}^{X} +
\mu_{R} \frac{\partial s_{A}^{R}}{\partial\mu_{X}} + \mu_{G} \frac{\partial
s_{A}^{G}}{\partial\mu_{X}} \right)  > 0 \label{eqdladmx}%
\end{equation}
\begin{equation}
\frac{\partial L_{B}}{\partial\mu_{X}} = \frac{1}{2}\left(  -s_{B}^{X} +
(1-\mu_{R}) \frac{\partial s_{B}^{R}}{\partial\mu_{X}} + (1-\mu_{G})
\frac{\partial s_{B}^{G}}{\partial\mu_{X}} \right)  < 0 \label{eqdlbdmx}%
\end{equation}
at $(\mu_{R}, \mu_{G})$. From (\ref{eqdladmx}) and (\ref{eqdlbdmx}), we find
that for all $X,Y\in\{R,G\}: \partial L_{A}/ \partial\mu_{X} > \partial L_{A}/
\partial\mu_{Y}$ is equivalent to
\begin{equation}
s_{A}^{X} + \mu_{X} \left(  \frac{\partial s_{A}^{X}}{\partial\mu_{X}}%
-\frac{\partial s_{A}^{X}}{\partial\mu_{Y}}\right)  > s_{A}^{Y} + \mu_{Y}
\left(  \frac{\partial s_{A}^{Y}}{\partial\mu_{Y}}-\frac{\partial s_{A}^{Y}%
}{\partial\mu_{X}}\right)  . \label{eq1lemma2}%
\end{equation}
With the definition of $s_{A}^{X}$ in (\ref{eqsax}) we can write out
\begin{equation}
s_{A}^{X} + \mu_{X} \left(  \frac{\partial s_{A}^{X}}{\partial\mu_{X}}%
-\frac{\partial s_{A}^{X}}{\partial\mu_{Y}}\right)  = s\left(  (p+\kappa
)\bar{\mu}+\lambda\mu_{X}/2\right)  +\frac{\mu_{X} \lambda}{2} s^{\prime
}\left(  (p+\kappa)\bar{\mu}+\lambda\mu_{X}/2\right)  \label{eq2lemma2}%
\end{equation}
when $X\neq Y$. Therefore $\mu_{X}>\mu_{Y}$ is equivalent to (\ref{eq1lemma2}%
), whenever (\ref{eq2lemma2}) is strictly monotone increasing with $\mu_{X}$,
where we can treat $\bar{\mu}=(\mu_{X}+\mu_{Y})/2$ as a constant. It is easy
to check that this is indeed the case under condition (\ref{eqlemma2}). We
conclude that hypothesis (\ref{eqlemmahyp1}) holds whenever $\mu_{X}>\mu_{Y}$.
With a similar derivation one can show that condition (\ref{eqlemma2}) implies
(\ref{eqlemmahyp2}) as well.
(ii) The second derivatives of $L_{A}$ and $L_{B}$ with respect to $\mu_{X}$
and $\mu_{Y}$ are
\begin{equation}
\frac{\partial^{2}L_{A}}{\partial\mu_{X}\partial\mu_{Y}}=\frac{1}{2}\left(
\frac{\partial s_{A}^{X}}{\partial\mu_{Y}}+\frac{\partial s_{A}^{Y}}%
{\partial\mu_{X}}+\mu_{R}\frac{\partial^{2}s_{A}^{R}}{\partial\mu_{X}%
\partial\mu_{Y}}+\mu_{G}\frac{\partial^{2}s_{A}^{G}}{\partial\mu_{X}%
\partial\mu_{Y}}\right)  \label{eqd2ladmxdmy}%
\end{equation}%
\begin{equation}
\frac{\partial^{2}L_{B}}{\partial\mu_{X}\partial\mu_{Y}}=\frac{1}{2}\left(
-\frac{\partial s_{B}^{X}}{\partial\mu_{Y}}-\frac{\partial s_{B}^{Y}}%
{\partial\mu_{X}}+(1-\mu_{R})\frac{\partial^{2}s_{B}^{R}}{\partial\mu
_{X}\partial\mu_{Y}}+(1-\mu_{G})\frac{\partial^{2}s_{B}^{G}}{\partial\mu
_{X}\partial\mu_{Y}}\right)  . \label{eqd2lbdmxdmy}%
\end{equation}
Taking the second derivatives of $s_{A}^{X}$, evaluating at $\mu_{R}=\mu
_{G}=\mu$ and reordering, we get that (\ref{eqlemmahyp3}) is equivalent to
\begin{equation}
s^{\prime\prime}((p+\kappa+\lambda)\mu/2)<-\frac{4}{\lambda\mu}s^{\prime
}((p+\kappa+\lambda)\mu/2). \label{eqlemmahyp3re}%
\end{equation}
Inequality (\ref{eqlemmahyp3re}) clearly holds if condition (\ref{eqlemma2})
holds, which proves (\ref{eqlemmahyp3}). In a similar fashion, (\ref{eqlemma2}%
) implies (\ref{eqlemmahyp4})
\end{proof}

\bigskip

\noindent PROPOSITION \ref{propwelfare} \begin{proof}
Suppose that $W(\mu_{R}, \mu_{G})$ is
maximized at $(\mu_{R}, \mu_{G})=(\tilde{\mu}_{R}, \tilde{\mu}_{G})$, where
$\tilde{\mu}_{R} \in(0,1)$ and $\tilde{\mu}_{G} \in(0,1)$. Define $c\equiv
L_{A}(\tilde{\mu}_{R}, \tilde{\mu}_{G})/L_{B}(\tilde{\mu}_{R}, \tilde{\mu}%
_{G})$, and consider the constrained maximization problem:
\begin{equation}
\label{eqmaxconstr}\max_{\mu_{R} \in[0,1], \mu_{G} \in[0,1]} W(\mu_{R},
\mu_{G}) \mbox{ s.t. } L_{A}(\mu_{R}, \mu_{G})=c L_{B}(\mu_{R}, \mu_{G}).
\end{equation}
Because by definition of $c$, the solution $(\tilde{\mu}_{R}, \tilde{\mu}%
_{G})$ satisfies the restriction
\begin{equation}
g(\mu_{R}, \mu_{G}) = c L_{B}(\mu_{R}, \mu_{G}) - L_{A}(\mu_{R}, \mu_{G})=0,
\label{eqconstr}%
\end{equation}
it actually solves the maximization problem (\ref{eqmaxconstr}).
Define the feasible set $C = \{\mu_{R} \in[0,1], \mu_{G} \in[0,1] | g(\mu_{R},
\mu_{G})=0\}$. By the assumption of constant returns to scale, we have that
for all $(\mu_{R}, \mu_{G})\in C$: $w_{A}(\mu_{R}, \mu_{G})$ and $w_{B}%
(\mu_{R}, \mu_{G})$ are constant, and therefore, at all $(\mu_{R}, \mu_{G})\in
C$, the welfare function (\ref{eqwelfare2}) can be written as
\[
W(\mu_{R}, \mu_{G}) = L_{A}(\mu_{R}, \mu_{G})(U(\overline{w_{A}})+
U(\overline{w_{B}})/c),
\]
which is monotone increasing with $L_{A}(\mu_{R}, \mu_{G})$. Therefore, the
solution $(\tilde{\mu}_{R}, \tilde{\mu}_{G})$ also solves the following
maximization problem:
\begin{equation}
\label{eqmaxconstr2}\max_{\mu_{R} \in[0,1], \mu_{G} \in[0,1]} L_{A}(\mu_{R},
\mu_{G}) \mbox{ s.t. } L_{A}(\mu_{R}, \mu_{G})=c L_{B}(\mu_{R}, \mu_{G}).
\end{equation}
We verify that $(\tilde{\mu}_{R}, \tilde{\mu}_{G})$ indeed satisfy the first-
and second-order conditions of problem (\ref{eqmaxconstr2}). The Lagrangian is
given by
\[
\mathcal{L}(\mu_{R}, \mu_{G},\psi) = (1 - \psi)L_{A}(\mu_{R}, \mu_{G})+\psi c
L_{B}(\mu_{R}, \mu_{G}).
\]
Since $(\tilde{\mu}_{R}, \tilde{\mu}_{G})$ is supposed to be interior, the
following first order constraints should hold:
\begin{align}
\label{eqfoc2}\frac{\partial\mathcal{L}}{\partial\mu_{R}}(\tilde{\mu}_{R},
\tilde{\mu}_{G},\psi) = (1 - \psi)\frac{\partial L_{A}}{\partial\mu_{R}%
}(\tilde{\mu}_{R}, \tilde{\mu}_{G}) + \psi c\frac{\partial L_{B}}{\partial
\mu_{R}}(\tilde{\mu}_{R}, \tilde{\mu}_{G})  &  = 0\\
\frac{\partial\mathcal{L}}{\partial\mu_{G}}(\tilde{\mu}_{R}, \tilde{\mu}%
_{G},\psi) = (1 - \psi)\frac{\partial L_{A}}{\partial\mu_{G}}(\tilde{\mu}_{R},
\tilde{\mu}_{G}) + \psi c\frac{\partial L_{B}}{\partial\mu_{G}}(\tilde{\mu}_{R},
\tilde{\mu}_{G})  &  = 0.
\end{align}
The first part of Lemma~\ref{lemma2} implies that $\psi\in(0,1)$ and that
under condition (\ref{eqlemma2}): $\mu_{R}>\mu_{G}$ if and only if
$\partial\mathcal{L}/\partial\mu_{R} > \partial\mathcal{L}/\partial\mu_{G}$.
Therefore, condition (\ref{eqlemma2}) and the first-order conditions imply
that $\tilde{\mu}_{R}=\tilde{\mu}_{G}\equiv\tilde{\mu}$.
Since $\tilde{\mu}_{R}=\tilde{\mu}_{G}$ defines a unique point in $C$, the
second-order condition should hold at $\tilde{\mu}_{R}=\tilde{\mu}_{G}$, which
says that the Hessian of the Lagrangian with respect to $(\mu_{R},\mu_{G})$
evaluated at the social optimum, $D_{\mu_{R},\mu_{G}}^{2}\mathcal{L}%
(\tilde{\mu},\tilde{\mu},\psi)$, is negative definite on the subspace
$\{z_{R},z_{G}|z_{R}(\partial g/\partial\mu_{R})+z_{G}(\partial g/\partial
\mu_{G})=0\}$. The second order condition is thus that at $(\mu_{R},\mu
_{G})=(\tilde{\mu},\tilde{\mu})$:
\begin{equation}
2\frac{\partial g}{\partial\mu_{R}}\frac{\partial g}{\partial\mu_{G}}%
\frac{\partial^{2}\mathcal{L}}{\partial\mu_{R}\partial\mu_{G}}-\left(
\frac{\partial g}{\partial\mu_{R}}\right)  ^{2}\frac{\partial^{2}\mathcal{L}%
}{(\partial\mu_{G})^{2}}-\left(  \frac{\partial g}{\partial\mu_{G}}\right)
^{2}\frac{\partial^{2}\mathcal{L}}{(\partial\mu_{R})^{2}}>0. \label{eqsoc}%
\end{equation}
Because $\frac{\partial g}{\partial\mu_{R}}(\tilde{\mu},\tilde{\mu}%
)=\frac{\partial g}{\partial\mu_{G}}(\tilde{\mu},\tilde{\mu}),$ and
$\frac{\partial^{2}\mathcal{L}}{(\partial\mu_{G})^{2}}(\tilde{\mu},\tilde{\mu
})=\frac{\partial^{2}\mathcal{L}}{(\partial\mu_{R})^{2}}(\tilde{\mu}%
,\tilde{\mu}),$ the second order condition (\ref{eqsoc}) simplifies to
$\frac{\partial^{2}\mathcal{L}}{\partial\mu_{R}\partial\mu_{G}}(\tilde{\mu
},\tilde{\mu})>\frac{\partial^{2}\mathcal{L}}{(\partial\mu_{R})^{2}}%
(\tilde{\mu},\tilde{\mu}),$ or equivalently
\begin{equation}
(1-\psi)\frac{\partial^{2}L_{A}}{\partial\mu_{R}\partial\mu_{G}}(\tilde{\mu
},\tilde{\mu})+\psi c \frac{\partial^{2}L_{B}}{\partial\mu_{R}\partial\mu_{G}%
}(\tilde{\mu},\tilde{\mu})>(1-\psi)\frac{\partial^{2}L_{A}}{(\partial\mu
_{R})^{2}}(\tilde{\mu},\tilde{\mu})+\psi c\frac{\partial^{2}L_{B}}{(\partial
\mu_{R})^{2}}(\tilde{\mu},\tilde{\mu}). \label{eqsoc2}%
\end{equation}
By the second part of Lemma \ref{lemma2}, inequality (\ref{eqsoc2}) cannot
hold under condition (\ref{eqlemma2}). Therefore we have a contradiction and
the non-segregation allocation $(\tilde{\mu}_{R},\tilde{\mu}_{G})$ cannot be a
social optimum. Since a social optimum exists by continuity of $W$ and
compactness of $[0,1]^{2}$, the social optimum necessarily has to involve
complete or partial segregation.
\end{proof}

\bibstyle{aea}

\end{document}